\tikzset{
  bigblue/.style={circle, draw=blue!80,fill=blue!40,thick, inner sep=1.5pt, minimum size=5mm},
  bigred/.style={circle, draw=red!80,fill=red!40,thick, inner sep=1.5pt, minimum size=5mm},
  bigblack/.style={circle, draw=black!100,fill=black!40,thick, inner sep=1.5pt, minimum size=5mm},
  bluevertex/.style={circle, draw=blue!100,fill=blue!100,thick, inner sep=0pt, minimum size=2mm},
  redvertex/.style={circle, draw=red!100,fill=red!100,thick, inner sep=0pt, minimum size=2mm},
  blackvertex/.style={circle, draw=black!100,fill=black!100,thick, inner sep=0pt, minimum size=2mm},  
  whitevertex/.style={circle, draw=black!100,fill=white!100,thick, inner sep=0pt, minimum size=2mm},  
  smallblack/.style={circle, draw=black!100,fill=black!100,thick, inner sep=0pt, minimum size=1mm},  
}
\newcommand{\ecto}{\ensuremath{\stackrel{ec}{\longrightarrow}}}
\newcommand{\sto}{\ensuremath{\stackrel{s}{\longrightarrow}}}
\newcommand{\echom}[1]{\textsc{ec-Hom}\ensuremath{(#1)}}
\newcommand{\shom}[1]{\textsc{s-Hom}\ensuremath{(#1)}}
\newcommand{\HOM}[1]{\textsc{Hom}\ensuremath{(#1)}}
\newcommand{\listHOM}[1]{\textsc{ListHom}\ensuremath{(#1)}}
\newcommand{\slistHOM}[1]{\textsc{s-ListHom}\ensuremath{(#1)}}
\newcommand{\CSP}[1]{\textsc{Csp}\ensuremath{(#1)}}
\newcommand{\RET}[1]{\textsc{Ret}\ensuremath{(#1)}}
\newcommand{\ecRET}[1]{\textsc{ec-Ret}\ensuremath{(#1)}}
\begin{document}

\title{The complexity of signed graph and edge-coloured graph homomorphisms}

\author{Richard C.~Brewster\footnote{\noindent Department of Mathematics and Statistics, Thompson Rivers University, Kamloops (Canada). rbrewster@tru.ca}~\footnote{Research supported by the Natural Sciences and Engineering Research Council of Canada.} \and Florent Foucaud\footnote{\noindent LIMOS, Universit\'e Blaise Pascal, Clermont-Ferrand (France). florent.foucaud@gmail.com} \and Pavol Hell\footnote{\noindent School of Computing Science, Simon Fraser University, Burnaby (Canada). pavol@sfu.ca}~\footnotemark[2]~\footnote{The author was additionally supported by the grant ERCCZ LL 1201, and part of this work was done while he was visiting the Simons Institute for the Theory of Computing.} \and Reza Naserasr\footnote{\noindent CNRS - IRIF, Université Paris Diderot, Paris (France). reza@lri.fr}}

\newtheorem{theorem}{Theorem}[section]
\newtheorem{prop}[theorem]{Proposition}
\newtheorem{cor}[theorem]{Corollary}
\newtheorem{quest}[theorem]{Question}
\newtheorem{conj}[theorem]{Conjecture}
\newtheorem{obs}[theorem]{Observation}
\newtheorem{rem}[theorem]{Remark}
\newtheorem{defn}[theorem]{Definition}

\newtheorem*{thm:dicho}{Theorem~\ref{thm:dicho}}

\maketitle

\begin{abstract}
We study homomorphism problems of signed graphs from a computational point of view. A signed graph $(G,\Sigma)$ is a graph $G$ where each edge is given a sign, positive or negative; $\Sigma\subseteq E(G)$ denotes the set of negative edges.
Thus, $(G, \Sigma)$ is a $2$-edge-coloured graph with the property that the edge-colours, $\{+, -\}$, form a group under multiplication. Central to the study of signed graphs is the operation of switching at a vertex, that results in changing the sign of each incident edge.  We study two types of homomorphisms of a signed graph $(G,\Sigma)$ to a signed graph $(H,\Pi)$: ec-homomorphisms and s-homomorphisms. Each is a standard graph homomorphism of $G$ to $H$ with some additional constraint. In the former, edge-signs are preserved. In the latter, edge-signs are preserved after the switching operation has been applied to a subset of vertices of $G$.

We prove a dichotomy theorem for s-homomorphism problems for a large class of (fixed) target signed graphs $(H,\Pi)$. Specifically, as long as $(H,\Pi)$ does not contain a negative (respectively a positive) loop, the problem is polynomial-time solvable if the core of $(H,\Pi)$ has at most two edges, and is NP-complete otherwise. (Note that this covers all simple signed graphs.) The same dichotomy holds if $(H,\Pi)$ has no negative digons, and we conjecture that it holds always. In our proofs, we reduce s-homomorphism problems to certain ec-homomorphism problems, for which we are able to show a dichotomy. In contrast, we prove that a dichotomy theorem for ec-homomorphism problems (even when restricted to bipartite target signed graphs) would settle the dichotomy conjecture of Feder and Vardi.
\end{abstract}

\section{Introduction and terminology}

Graph homomorphisms and their variants play a fundamental role in the study of computational complexity. For example, the celebrated CSP Dichotomy Conjecture of Feder and Vardi~\cite{FV98}, a major open problem in the area, can be reformulated in terms of digraph homomorphisms or graph retractions (to fixed targets). As a special case, the dichotomy theorem of Hell and Ne\v{s}et\v{r}il~\cite{HN90} shows that there are no NP-intermediate graph homomorphism problems.
In this paper, we study homomorphisms of signed graphs from an algorithmic point of view.  We study two natural types of homomorphism problems on signed graphs, one with switching and one without.  For the former, we prove a dichotomy theorem for a large class of signed graphs, while for the latter we prove that a dichotomy would answer the CSP Dichotomy Conjecture in the positive.  

We begin by defining signed graphs and the two types of homomorphisms.  We remark that we have adopted the language of signed graphs for this paper, but readers familiar with edge-coloured graphs will recognize that our work may be equivalently formulated in terms of edge-coloured graphs. For example, the edge-coloured viewpoint is used in~\cite{OPS16}.

\subsection{Signed graphs}

A \emph{signed graph} is a graph $G$ together with a signing function $\sigma : E(G) \to \{ +, - \}$. By setting $\Sigma=\sigma^{-1}(-)$, we use the notation $(G, \Sigma)$ to denote this signed graph.
The set $\Sigma$ of negative edges is referred to as the \emph{signature} of $(G,\Sigma)$. In all our diagrams, these edges are drawn in red with dashed lines. The other edges, which are positive, are drawn in blue with solid lines. Signed graphs were introduced by Harary in~\cite{H53}, and studied in depth by Zaslavsky (see for example~\cite{Z81,Z82b,Z82a,Z82c,Z84}). The notion of distinguishing a set $\Sigma$ of edges can also be found in the work of K\"onig~\cite{K36}. 

Signed graphs are different from $2$-edge-coloured graphs with arbitrary colours, due to the fact that $\{+,-\}$ forms a group with respect to the product of signs. The most crucial of these differences comes from the following definition of the \emph{sign} of a cycle, or more generally, a closed walk. A cycle or closed walk of $(G, \Sigma)$ is said to be \emph{negative} if the product of the signs of all the edges (considering multiplicities if an edge is traversed more than once) is the negative sign, and \emph{positive} otherwise. A (signed) subgraph of $(G, \Sigma)$ is called \emph{balanced} if it contains no negative cycle (equivalently no negative closed walk, noting that each negative closed walk must contain a negative cycle). This notion of balance was introduced by Harary~\cite{H53} and a similar idea appears in the work of K\"onig~\cite{K36}.
A cycle of length~$2$ is a \emph{digon}. In our work, we do not consider multiple edges of the same sign, and thus we only consider negative digons.  

The second notion of importance for signed graphs is the operation of \emph{switching}, introduced by Zaslavsky~\cite{Z82b}.
To \emph{switch} at a vertex $v$ means to multiply the signs of all edges incident to $v$ by $-$, that is, to switch the sign of each of these edges. (In the case of a loop at $v$, its sign is multiplied twice and hence it is invariant under switching.) Given signatures $\Sigma$ and $\Sigma'$ on a graph $G$, the signature $\Sigma'$ is said to be \emph{switching equivalent} to $\Sigma$, denoted $\Sigma \equiv \Sigma'$, if it can be obtained from $\Sigma$ by a sequence of switchings. Equivalently, $\Sigma \equiv \Sigma'$ if their symmetric difference is an edge-cut of $G$.

Zaslavsky proved that two signatures $\Sigma$ and $\Sigma'$ of a graph $G$ are switching equivalent if and only if they induce the same cycle signs~\cite{Z82b}. Inherent in the proof is an algorithm to test whether $\Sigma$ and $\Sigma'$ are switching equivalent. For completeness, in Section~\ref{sec:switching-eq}, we offer an alternative certifying algorithm obtained by generalizing a method from~\cite{H53}.

Each of the two types of homomorphisms studied in this paper capture, in particular, the concept of proper vertex-colouring of signed graphs introduced by Zaslavsky~\cite{Z82a} (as mappings to certain families of signed graphs).

\subsection{ec-homomorphisms}

Recall that given two graphs $G$ and $H$, a \emph{homomorphism} $\varphi$ of $G$ to $H$ is a mapping of the vertices $\varphi: V(G) \to V(H)$ such that if two vertices $x$ and $y$ are adjacent in $G$, then their images are adjacent in $H$. We write $G \to H$ to denote the existence of a homomorphism or $\varphi:G \to H$ when we wish to explicitly name the mapping. One natural extension of this idea to signed graphs is to additionally require that homomorphisms preserve the sign of edges.

\begin{defn}\label{defn:echom}
Let $(G,\Sigma)$ and $(H,\Pi)$ be two signed graphs. An \emph{ec-homomorphism} of $(G,\Sigma)$ to $(H,\Pi)$ is a (graph) homomorphism $\varphi:G \to H$ such that for each edge $e$ between two vertices $x$ and $y$ in $(G,\Sigma)$, there is an edge between $\varphi(x)$ and $\varphi(y)$ in $(H,\Pi)$ having the same sign as $e$.

When there exists such an ec-homomorphism, we write $(G, \Sigma) \ecto (H, \Pi)$ or $\varphi: (G,\Sigma) \ecto (H,\Pi)$ when we wish to explicitly name the mapping.
\end{defn}

An ec-homomorphism $r: (G,\Sigma) \ecto (H,\Pi)$ is an \emph{ec-retraction} if $(H,\Pi)$ is a subgraph of $(G,\Sigma)$ and $r$ is the identity on $(H,\Pi)$. A signed graph $(H,\Pi)$ is an \emph{ec-core} if for each ec-homomorphism $\varphi: (H,\Pi) \ecto (H,\Pi)$, the mapping $\varphi$ is an ec-automorphism. Every signed graph $(H,\Pi)$ admits an ec-retraction to a subgraph $(H',\Pi')$ that is an ec-core. In fact, $(H',\Pi')$ is unique up to ec-isomorphism and we call it \emph{the ec-core} of $(H,\Pi)$~\cite{HNbook}.

The complexity of determining the existence of homomorphisms has received much attention in the literature. For classical undirected graphs, the complexity (for fixed targets) is completely determined by the dichotomy theorem of Hell and Ne\v{s}et\v{r}il. Let $H$ be a fixed graph. We define the decision problem $\HOM{H}$, also known as $H$-\textsc{Colouring}.

\medskip\noindent
\HOM{H} \\
Instance: A graph $G$.\\
Question: Does $G \to H$?

\begin{theorem}[Hell and Ne\v{s}et\v{r}il~\cite{HN90}]\label{thm:HN}
If a graph $H$ is bipartite or contains a loop, then \HOM{H} is polynomial-time solvable; otherwise, it is NP-complete.
\end{theorem}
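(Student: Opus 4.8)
The polynomial-time cases are immediate, so the plan is to dispose of them first and concentrate the real work on NP-hardness. If $H$ has a loop at a vertex $w$, then the constant map sending all of $V(G)$ to $w$ is a homomorphism, so every $G$ is a yes-instance. If $H$ is bipartite and has at least one edge, then $G \to H$ if and only if every component of $G$ is bipartite: a bipartite graph maps onto a single edge $K_2 \subseteq H$, while a non-bipartite $G$ has an odd closed walk whose image would be an odd closed walk in the bipartite $H$, which is impossible. Bipartiteness is decided by breadth-first $2$-colouring, so \HOM{H} is in P here (the edgeless case being trivial). Membership of \HOM{H} in NP is clear for every fixed $H$, since a map $\varphi : V(G) \to V(H)$ is a polynomially checkable certificate.

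The substance of the theorem is to show that \HOM{H} is NP-hard whenever $H$ is loopless and non-bipartite. First I would reduce to the case that $H$ is a core: since $G \to H$ if and only if $G \to H'$ for the core $H'$ of $H$, the problems \HOM{H} and \HOM{H'} are literally the same, and $H'$ inherits looplessness (it is an induced subgraph) and non-bipartiteness (it receives a homomorphism from an odd cycle of $H$). The reduction establishing hardness would be from a problem already known to be NP-complete, the natural base case being \HOM{K_n} with $n \ge 3$, that is, $n$-colouring.

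The engine of the proof is a pair of gadget reductions. The \emph{indicator construction} fixes a small ``indicator'' graph $I$ with two distinguished vertices $i,j$ and builds from $H$ a new target $H^{\ast}$ on the same vertex set, making $u$ adjacent to $v$ precisely when some homomorphism $I \to H$ sends $i \mapsto u$ and $j \mapsto v$; replacing each edge of an input instance by a copy of $I$ then gives a polynomial reduction showing that \HOM{H^\ast} is no harder than \HOM{H}. A companion \emph{sub-indicator construction} allows one to pass to a controlled induced subgraph of $H$ by pinning chosen vertices to prescribed images. The plan is to apply these two operations so as to transform an arbitrary non-bipartite loopless core into a target whose hardness is already known, ideally some complete graph $K_n$ with $n \ge 3$.

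The hard part --- and the genuine content of the Hell--Ne\v{s}et\v{r}il theorem --- is proving that such a sequence of indicator and sub-indicator reductions always exists, no matter what the odd girth of $H$ is. The obstacle is that a non-bipartite core may have large odd girth and contain no small complete subgraph, so one cannot simply read a triangle off the target. Here I would choose indicators built from odd paths and cycles designed to \emph{fold} the odd girth downward, argue that the derived graph $H^\ast$ remains non-bipartite but is strictly simpler (for instance of smaller odd girth), and iterate, so that the process terminates at a complete graph. Verifying that each chosen indicator has exactly the claimed effect, that non-bipartiteness is preserved at every step, and that the iteration is guaranteed to terminate is where essentially all the technical weight lies; the remaining steps are routine bookkeeping.
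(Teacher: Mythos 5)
First, note that the paper does not prove this statement at all: Theorem~\ref{thm:HN} is quoted from Hell and Ne\v{s}et\v{r}il~\cite{HN90} as an external black box, so there is no internal proof to compare against. Judged on its own terms, your proposal correctly disposes of the easy parts: the polynomial cases (constant map to a loop; folding a bipartite $G$ onto an edge of a bipartite $H$, with the odd-closed-walk argument for the converse), membership in NP, and the reduction to the case where $H$ is a core. Your identification of the indicator and sub-indicator constructions as the engine of the hardness proof is also historically accurate.

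However, what you have written for the NP-hardness direction is a plan, not a proof, and you say so yourself: ``essentially all the technical weight'' lies in steps you do not carry out. That residue is the entire content of the theorem. Concretely, you never exhibit the indicators, never prove that the derived target $H^{\ast}$ is loop-free (a single loop in $H^{\ast}$ makes \HOM{H^{\ast}} trivial and the reduction worthless --- this is exactly the hazard the present paper has to dodge in Cases~(b) and~(d) of Theorem~\ref{thm:dicho}), never prove $H^{\ast}$ is non-bipartite, and never establish a well-founded measure under which the iteration terminates. Moreover, the picture of ``folding the odd girth downward until a complete graph appears'' is not how the argument can be made to work naively: the actual Hell--Ne\v{s}et\v{r}il proof is a minimal-counterexample argument with a delicate case analysis on the shortest odd cycle and on triangles, using sub-indicators to pin neighbourhood structure, and it occupies many pages; there is no known short direct iteration of the kind you describe (the later short proofs, e.g.\ via polymorphisms or Siggers's argument, take entirely different routes). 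So the proposal has a genuine gap: the hardness half of the dichotomy is asserted with a plausible strategy but not proved.
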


One of the questions we will consider in this work, is a possible extension of Theorem~\ref{thm:HN} for the class of ec-homomorphism problems for signed graphs. To this end, let $(H,\Pi)$ be a fixed signed graph. We define the following decision problem.

\medskip\noindent
\echom{H,\Pi} \\
Instance: A signed graph $(G,\Sigma)$.\\
Question: Does $(G,\Sigma) \ecto (H,\Pi)$?
\medskip

The notion of ec-homomorphisms of signed graphs (and graphs with any number of edge-colours) was studied in~\cite{AM98,HKRS01,MPPRS10} from a non-computational point of view. See also~\cite{Bthesis,B94,BH00,MT16,MW16} for studies of the computational complexity of ec-homomorphism problems.

Throughout the paper we will exploit the following (immediate) connection between graph homomorphisms and ec-homomorphisms.

\begin{obs}\label{obs:eq-emptyset-E(H)}
Let $G$ and $H$ be two graphs. Let $\varphi: V(G)\to V(H)$ be a vertex-mapping. The following are equivalent.
\begin{list}{(\alph{enumi})}{\usecounter{enumi}}
\item $\varphi: G \to H$ is a graph homomorphism.
\item $\varphi: (G,\emptyset) \ecto (H,\emptyset)$ is an ec-homomorphism.
\item $\varphi: (G,E(G)) \ecto (H,E(H))$ is an ec-homomorphism.
\end{list}
\end{obs}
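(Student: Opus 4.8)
The plan is to establish the two equivalences (a)$\Leftrightarrow$(b) and (a)$\Leftrightarrow$(c) separately, exploiting the fact that each of the signed graphs appearing in (b) and (c) is \emph{monochromatic}: in (b) every edge of both $(G,\emptyset)$ and $(H,\emptyset)$ is positive, whereas in (c) every edge of both $(G,E(G))$ and $(H,E(H))$ is negative. The key observation is that when a single sign occurs throughout the source and throughout the target, the sign-preservation clause in Definition~\ref{defn:echom} becomes vacuous, so that being an ec-homomorphism collapses to being an ordinary graph homomorphism.

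For the forward implications (b)$\Rightarrow$(a) and (c)$\Rightarrow$(a), I would simply invoke the definition directly: by Definition~\ref{defn:echom} every ec-homomorphism $\varphi\colon(G,\Sigma)\ecto(H,\Pi)$ is, in particular, a graph homomorphism $\varphi\colon G\to H$, so dropping the sign requirement yields (a) immediately in both cases. For the converse implications (a)$\Rightarrow$(b) and (a)$\Rightarrow$(c), I would start from a graph homomorphism $\varphi\colon G\to H$ and verify the extra sign condition. Consider an edge $e=xy$ of $G$. Since $\varphi$ is a homomorphism, $\varphi(x)\varphi(y)$ is an edge of $H$. In case (b), $e$ is positive in $(G,\emptyset)$ and $\varphi(x)\varphi(y)$ is positive in $(H,\emptyset)$, so the signs agree; in case (c), $e$ is negative in $(G,E(G))$ and $\varphi(x)\varphi(y)$ is negative in $(H,E(H))$, so again the signs agree. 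In either case the condition of Definition~\ref{defn:echom} holds for every edge, establishing the corresponding ec-homomorphism.

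Since the statement is an immediate consequence of unwinding the definitions, there is no substantive obstacle; the only point requiring care is to match the single sign present in the source with the \emph{same} single sign in the target, which is exactly what the pairings $(\emptyset,\emptyset)$ and $(E(G),E(H))$ guarantee. Had the target used the opposite sign, the equivalence would fail, which illustrates that the particular choice of signatures in (b) and (c) is essential.
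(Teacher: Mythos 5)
Your proof is correct: the paper states this observation without proof, deeming it immediate, and your argument is exactly the definitional unwinding the authors have in mind — when source and target are monochromatic with the same sign, the sign-preservation clause of Definition~\ref{defn:echom} is automatically satisfied. Your closing remark that the pairings $(\emptyset,\emptyset)$ and $(E(G),E(H))$ are essential is a nice touch but not needed for the claim itself.
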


\subsection{Constraint satisfaction problems and homomorphisms}

A more general setting for the study of \HOM{H} and \echom{H,\Pi} problems is the one of general relational structures, where instead of binary relations we have relations of arbitrary arities. More formally, a relational structure $S$ over a given \emph{vocabulary} (a set of pairs $(R_i,a_i)$ of relation names and arities) consists of a \emph{domain} $V(S)$ of vertices together with a set of relations corresponding to the vocabulary, that is, $R_i\subseteq V(S)^{a_i}$ for each relation $R_i$ of the vocabulary. 
Given two relational structures $S$ and $T$ over the same vocabulary, a homomorphism of $S$ to $T$ is a mapping $\varphi:V(S)\to V(T)$ such that each relation $R_i$ is preserved; that is, for each element of $R_i$ in $S$, its image in $T$ also belongs to $R_i$ in $T$. We write $S\to T$ to denote the existence of such a homomorphism.

For a fixed relational structure $T$ (called the \emph{template}), the \emph{constraint satisfaction problem} for $T$ is the decision problem defined as follows.

\medskip\noindent
\CSP{T} \\
Instance: A relational structure $S$ over the same vocabulary as $T$.\\
Question: Does $S \to T$?
\medskip

The class CSP then denotes the set of all problems of the type \CSP{T}. Motivated by Theorem~\ref{thm:HN}, Feder and Vardi~\cite{FV98} asked whether for every relational structure $T$, the problem \CSP{T} is either polynomial-time solvable or NP-complete (thus not NP-intermediate). This question has received much attention and has become known as the \emph{Dichotomy Conjecture}. 

\begin{conj}[Dichotomy Conjecture, Feder and Vardi~\cite{FV98}]\label{conj:dicho}
For any relational structure $T$, \CSP{T} is either NP-complete or polynomial-time solvable.
\end{conj}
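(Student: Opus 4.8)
I should note at the outset that this is a celebrated open problem, and I do not expect a self-contained proof; what follows is the program one would pursue. The plan would be to abandon a direct combinatorial attack in favour of the algebraic (universal-algebra) reformulation of CSP. Two preliminary reductions set the stage. First, by the work of Feder and Vardi every \CSP{T} is polynomial-time equivalent to one in which the template is a single binary relation (a digraph homomorphism problem), so a dichotomy for digraphs would already suffice. Second --- and more usefully --- one may assume $T$ is a \emph{core} and then, by adjoining all singleton unary relations $\{a\}$ for $a\in V(T)$, assume the template is a \emph{core with constants}; neither step changes the complexity up to polynomial-time equivalence. For such templates the governing invariant is the clone of \emph{polymorphisms} of $T$ (the operations $f\colon V(T)^k\to V(T)$ preserving every relation), and a Galois correspondence guarantees that the complexity of \CSP{T} depends only on this clone. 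The conjecture thus becomes a statement about where to draw the line among polymorphism clones.

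The conjectured dividing line is the presence or absence of a \emph{weak near-unanimity} (equivalently \emph{Taylor}) polymorphism. The hardness direction can be established unconditionally: if the idempotent polymorphism algebra of $T$ admits no Taylor operation, then it has a subalgebra with a quotient that is a nontrivial $G$-set, and from this one constructs a polynomial-time reduction from a canonical NP-complete problem (such as graph $3$-colouring or not-all-equal $3$-SAT), proving \CSP{T} NP-complete. So one half of the dichotomy is already in hand, and it reproduces the hardness half of Theorem~\ref{thm:HN} as a special case.

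Everything then reduces to the tractability direction, which is where the real difficulty lies: one must exhibit a single polynomial-time algorithm deciding $S\to T$ for every $T$ that admits a weak near-unanimity polymorphism. I would build it by combining the two known algorithmic paradigms. When the algebra enjoys the \emph{few subpowers} property (in particular when it has a Maltsev polymorphism), a compact-representation, generalized-Gaussian-elimination algorithm solves the problem; when the algebra has \emph{bounded width}, enforcing local consistency (arc consistency, then $(2,3)$-consistency) already decides it. The main obstacle --- and the reason the conjecture is so resistant --- is the general Taylor case, where neither paradigm applies on its own. Here one must interleave them: repeatedly either propagate local consistency or perform a linear-algebra-style reduction on a distinguished sub-instance, shrinking the variable domains in a controlled way so that the recursion terminates in polynomial time while staying sound and complete. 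Making this interleaving correct for every Taylor algebra is precisely the crux that any full resolution of the conjecture must surmount, and it is the reason the statement is recorded here as a conjecture rather than a theorem.
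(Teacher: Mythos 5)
The statement you were asked to prove is Conjecture~\ref{conj:dicho}: the paper does not prove it, states it explicitly as an open problem (the Feder--Vardi Dichotomy Conjecture), and uses it only as motivation and as the target of the equivalence established in Theorem~\ref{thm:CSP}. So there is no proof in the paper to compare yours against, and your submission --- as you candidly say at the outset --- is not a proof either, but a description of the standard programme for attacking the conjecture.

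As a survey of that programme, your account is accurate: the reduction of every \CSP{T} to a digraph homomorphism problem, the passage to a core with constants, the Galois correspondence reducing the complexity to the polymorphism clone, the Taylor/weak-near-unanimity dividing line, and the unconditional hardness half are all correctly stated and consistent with the literature this paper builds on. The genuine gap is exactly the one you yourself name: the tractability half, i.e., a single polynomial-time algorithm for every template admitting a Taylor polymorphism. ``Interleave local consistency with Gaussian-elimination-style domain reductions'' is a plan, not an algorithm; making such a recursion sound, complete, and polynomially bounded for an arbitrary Taylor algebra is the entire content of the conjecture, and nothing in your write-up (or in this paper) supplies it. For the purposes of this paper the statement must therefore remain labelled a conjecture. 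It is worth recording that the conjecture was subsequently resolved in the affirmative by Bulatov and, independently, by Zhuk (2017), by arguments broadly in the spirit of the programme you describe, but requiring substantially more algebraic machinery than the interleaving heuristic you sketch.
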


Conjecture~\ref{conj:dicho} remains a major open problem in computational complexity. Note that the Hell-Ne\v{s}et\v{r}il dichotomy theorem for graph homomorphisms~\cite{HN90} solves Conjecture~\ref{conj:dicho} for templates $T$ having just one symmetric binary relation. A number of equivalent formulations have been proposed. In particular, Feder and Vardi have shown that Conjecture~\ref{conj:dicho} would be answered positively by a dichotomy theorem over the restricted subclass of CSP corresponding to (bipartite) digraph homomorphism problems~\cite{FV98}, that is, templates with just one (not necessarily symmetric) binary relation.

In Section~\ref{sec:CSP} of this work, we prove a similar result for ec-homomorphism problems of signed (bipartite) graphs.

\subsection{s-homomorphisms}

We now turn our attention to a second type of homomorphism on signed graphs (introduced by Guenin~\cite{G05} and studied further by Naserasr, Rollov\'a and Sopena in~\cite{NRS13,NRS14}). These homomorphisms incorporate switching and (as we will see in Section~\ref{sec:eq-defn}) preserve the sign of cycles.

The notion of switching allows an extension of the theory of graph minors to signed graphs (introduced by Zaslavsky~\cite{Z82b}) which has a stronger interplay with colouring problems. A \emph{minor} of a signed graph is a graph obtained from a sequence of the following operations: (i) deleting vertices or edges, (ii) contracting positive edges, and (iii) switching. Thus, the image of any negative cycle, unless it is deleted, remains negative. Furthermore, observing that negative cycles of $(G, E(G))$ are exactly the odd cycles of $G$, many colouring results on minor-closed graph families have been strengthened using this notion of minor. The most notable one is a strengthening of Hadwiger's conjecture proposed by Gerards and Seymour in 1980: if $(G, E(G))$ contains no $(K_n, E(K_n))$ minor, then $G$ is $(n-1)$-colourable (this is known as the Odd Hadwiger conjecture, see~\cite{JT95}). Note that the case $n=3$ of this conjecture is equivalent to claiming that all bipartite graphs are $2$-colourable, whereas the original Hadwiger conjecture only asserts the $2$-colourability of forests.
As discussed in~\cite{NRS14}, it is then natural to study colouring through the following notion of homomorphisms of signed graphs. 

\begin{defn}\label{defn:signhom}
Let $(G, \Sigma)$ and $(H, \Pi)$ be signed graphs. An \emph{s-homomorphism} of $(G, \Sigma)$ to $(H, \Pi)$ is a mapping $\varphi:V(G) \to V(H)$ such that there exist a switching $(G, \Sigma')$ of $(G, \Sigma)$ and a switching $(H, \Pi')$ of $(H, \Pi)$, such that $\varphi:(G,\Sigma') \ecto (H,\Pi')$ is an ec-homomorphism. 

When there exists such an s-homomorphism, we write $(G, \Sigma) \sto (H, \Pi)$, or $\varphi: (G,\Sigma) \sto (H,\Pi)$ when we wish to explicitly name the mapping.
\end{defn}

In the above definition, note that we may always assume that $\Pi' = \Pi$, since we can perform the necessary switchings on $(G,\Sigma)$ instead of $(H,\Pi)$. Thus we can, and will, choose a specific fixed signature for $H$ in our proofs.

As in the case of ec-homomorphisms, an s-homomorphism of signed graphs is also a homomorphism of the underlying graphs. We prove in Section~\ref{sec:eq-defn} that s-homomorphisms preserve the essential structures of signed graphs, namely adjacency and the sign of cycles. The concept of s-homomorphism was defined by Guenin in~\cite{G05}, where the author used this notion to capture a packing problem. Recently, the theory of s-homomorphisms was more extensively developed in~\cite{NRS14} (see also~\cite{FN14,NRS13,OPS16} for subsequent studies). A related notion of homomorphisms of edge-coloured graphs where there is a switching operation is studied in~\cite{BG09,MT16,MW16}. See also~\cite{KM04} for homomorphisms of digraphs where a similar switching operation is allowed.

As an example, consider the vertex-mapping $\varphi:V(G) \to V(H)$ in Figure~\ref{FixingEdges}. As defined, $\varphi$ preserves adjacency but not the sign of the edges. However, $\varphi$ is an s-homomorphism of $(G, \Sigma)$ to $(H, \Pi)$. Indeed, there exists a switching of $\Sigma$ to $\Sigma'$ (defined in Figure~\ref{FixingEdges}) such that $\varphi$ preserves edges and their signs. In fact, several switchings of $(G, \Sigma)$ exist such that the given vertex-mapping preserves edges and their signs.

\begin{figure}[!htpb]
\centering
  \begin{tikzpicture}[every loop/.style={},scale=1.2]
  \node[blackvertex] [label={270:$u$}] (u) at (216:1.5cm) {};
  \node[blackvertex] [label={90:$v$}] (v) at (144:1.5cm) {};
  \node[blackvertex] [label={270:$x$}] (x) at (288:1.5cm) {};
  \node[blackvertex] [label={90:$y$}] (y) at (72:1.5cm) {};
  \node[blackvertex] [label={0:$z$}] (z) at (0:1.5cm) {};
  
  % Edge mapping
  \node at (36:1.5cm) {$g_1$};
  \node at (108:1.5cm) {$g_2$};
  \node at (180:1.5cm) {$g_3$};
  \node at (252:1.5cm) {$g_4$};
  \node at (324:1.5cm) {$g_5$};  

  \node at (0,0) {$(G,\Sigma)$};
  
  \draw[thick,blue]  (y)--(v)--(u)--(x)--(z);
  \draw[thick,red,dashed] (y)--(z);
  
  \begin{scope}[xshift=4.5cm]
   \node[blackvertex] [label={270:$b$}] (u0) at (240:1.2cm) {};
   \node[blackvertex] [label={90:$c$}] (v0) at (120:1.2cm) {};
   \node[blackvertex] [label={0:$a$}] (u1) at (0:1.2cm) {};
   \node at (1.5,-1.5) {$(H,\Pi)$};
   
   \node at (60:0.9cm) {$h_1$};
   \node at (180:0.15cm) {$h_2$};
   \node at (180:1.05cm) {$h_3$};
   \node at (300:0.9cm) {$h_4$}; 
   
  \draw[thick,blue] (u0) to[bend right=20] (v0);
  \draw[thick,red,dashed] (u0) to[bend left=20] (v0);  
  \draw[thick,blue] (u0)--(u1)--(v0);
  \end{scope}
  
\end{tikzpicture}
\caption{The mapping $\varphi:V(G) \to V(H)$ is defined by $\varphi(z)=a, \varphi(y)=\varphi(u)=b, \varphi(v)=\varphi(x)=c$. After switching at $y$ and $v$, this is an edge and sign-preserving mapping. Switching at $y$ and $u$ is a second possibility.}\label{FixingEdges}
\end{figure}
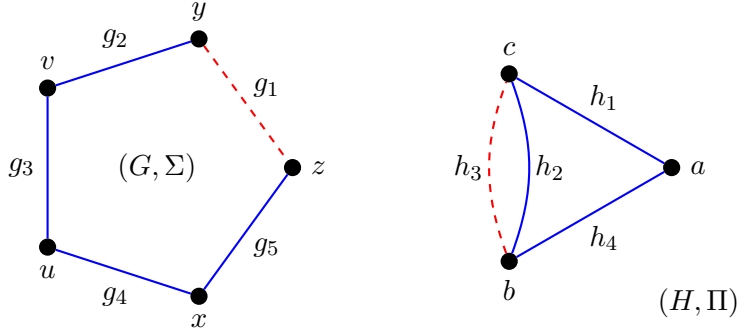

As for ec-homomorphisms, a signed graph $(H,\Pi)$ is defined to be an \emph{s-core} if for every s-homomorphism $\varphi: (H,\Pi) \sto (H,\Pi)$, $\varphi$ is an s-automorphism. \emph{The} s-core of $(H,\Pi)$ is defined as before, indeed it is again easy to prove that every signed graph $(H,\Pi)$ has an s-core that is unique up to s-isomorphism and switching.

Let $(H, \Pi)$ be a fixed signed graph. We define the s-homomorphism decision problem for $(H, \Pi)$ analogously as in the ec-case.

\medskip\noindent
\shom{H, \Pi} \\
Instance: A signed graph $(G, \Sigma)$.\\
Question: Does $(G,\Sigma) \sto (H, \Pi)$?
\medskip

A primary concern in this work is the computational complexity of the \shom{H, \Pi} problem. By Observation~\ref{obs:eq-emptyset-E(H)}, we note that when $\Pi$ is switching equivalent to $E(H)$ or to $\emptyset$, then \shom{H,\Pi} has the same complexity as \HOM{H}.
To see this, given an input signed graph $(G, \Sigma)$, we can decide in polynomial time whether $\Sigma$ is switching equivalent to $E(G)$ or to $\emptyset$ (see Proposition~\ref{prop:equiv}). Then, the computational complexity of the problem is decided by Theorem~\ref{thm:HN}. Prior to this work, the only other known case of the problem's complexity is the study of~\cite{FN14} about signed cycles, where it is proved that \shom{C_{2k}, \Pi} is NP-complete if $\Pi$ has an odd number of elements, and polynomial-time solvable otherwise (note that \shom{C_{2k+1}, \Pi} for $k\geq 1$ is always NP-complete). Here, we give a full dichotomy characterization in the case where $(H,\Pi)$ is a simple signed graph. Indeed we prove an even stronger result.

\begin{thm:dicho}
Let $(H,\Pi)$ be a connected signed graph that does not contain all three of a negative digon, a negative loop, and a positive loop. Then, \shom{H,\Pi} is polynomial-time solvable if the s-core of $(H,\Pi)$ has at most two edges; it is NP-complete otherwise.
\end{thm:dicho}

We believe that a full dichotomy holds for s-homomorphism problems. In fact, we conjecture that all cases not covered by Theorem~\ref{thm:dicho} are NP-complete.

\begin{conj}\label{conj}
Let $(H,\Pi)$ be a connected signed graph. Then, \shom{H,\Pi} is polynomial-time solvable if the s-core of $(H,\Pi)$ has at most two edges; it is NP-complete otherwise.
\end{conj}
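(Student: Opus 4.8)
My plan is to reduce the conjecture to the single family of cases left open by Theorem~\ref{thm:dicho}, and then to attack that family by gadget reductions. Since \shom{H,\Pi} has the same complexity as the corresponding problem on the s-core, I may assume throughout that $(H,\Pi)$ is itself an s-core. The polynomial side then needs no new argument: an s-core with at most two edges cannot contain all three of a negative digon, a negative loop and a positive loop (together these require at least four edges), so Theorem~\ref{thm:dicho} already supplies a polynomial algorithm. On the hardness side, Theorem~\ref{thm:dicho} disposes of every s-core with more than two edges \emph{except} those that simultaneously contain a negative digon, a negative loop and a positive loop. Thus the entire remaining content of the conjecture is the statement: \emph{if an s-core $(H,\Pi)$ contains a negative digon, a negative loop and a positive loop, then \shom{H,\Pi} is NP-complete.}

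A structural reduction narrows this further. If a single vertex $v$ carried both a positive and a negative loop, then every signed graph would s-map to $v$ (send positive edges to the positive loop and negative edges to the negative loop, with no switching), so the s-core would be $v$ with its two loops, having only two edges, contradicting our standing assumption. Hence in the remaining case the negative loop sits at a vertex $a$, the positive loop at a distinct vertex $b$, and the negative digon on a pair $\{c,d\}$ (possibly meeting $\{a,b\}$). The goal is to prove \shom{H,\Pi} NP-complete for every such configuration.

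The hardness strategy I propose is to isolate inside $(H,\Pi)$ a loopless sub-target whose s-Hom problem is already NP-complete by Theorem~\ref{thm:dicho}, and to force the images of the input into it. Loops in the \emph{source} provide partial control: a negative loop attached to a source vertex $x$ forces $x$ to map to a negative-loop vertex of $(H,\Pi)$, a positive loop forces $x$ to a positive-loop vertex, and these gadgets are switching-invariant, hence survive in every s-homomorphism. Using them, certain instances of a restricted list problem \slistHOM{H,\Pi}—where the realisable lists are exactly those cut out by the loop classes—can be encoded as instances of \shom{H,\Pi}, so that NP-hardness of the list instances transfers to \shom{H,\Pi}. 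The aim is then to choose lists that confine the active part of the reduction to the negative digon together with its incident loopless edges, a sub-structure that is loopless and therefore cannot contain all three obstructions, so that its own s-core is governed by Theorem~\ref{thm:dicho}.

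The main obstacle is precisely the interaction of switching with the two loops. Because loops are invariant under switching, $a$ behaves as a sink absorbing \emph{any} negative edge and $b$ as a sink absorbing \emph{any} positive edge, no matter how the source is switched, and it is this absorbing behaviour that the reduction must neutralise. The loop-gadgets above force an image \emph{into} a loop class, whereas what is needed is to force images \emph{away} from $a$ and from $b$ while still permitting $c$ and $d$; constructing such exclusion gadgets requires pendant signed graphs that the digon vertices admit but the loop vertices reject, and their existence depends delicately on how the digon is joined to $a$ and $b$. A further difficulty is that in a connected s-core the digon may have only a sparse loopless neighbourhood (for instance when $c,d$ attach to the rest only through edges at $a$ and $b$), so that the loopless sub-target around it has at most two edges and yields no hardness on its own; there the loops must genuinely participate, and the clean reduction breaks down. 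I therefore expect the proof to proceed by a case analysis on the joining patterns of the digon to $a$ and $b$, supplying a tailored exclusion gadget in each case—or, where no single gadget suffices, a direct encoding of a known NP-complete \CSP{T} whose hardness is checked by hand. Producing such exclusion gadgets \emph{uniformly}, across every s-core containing all three obstructions, is the crux, and is exactly the point at which the argument currently stops short of a theorem, which is why the statement remains a conjecture.
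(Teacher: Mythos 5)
You have not proved the statement, but you should be aware that neither does the paper: the statement is Conjecture~\ref{conj}, which the authors leave open. What your write-up does do is delimit the open territory, and there it agrees exactly with the paper's own framing. Your polynomial-side argument matches the proof of Theorem~\ref{thm:dicho} (the paper explicitly notes that the polynomial half of Conjecture~\ref{conj} holds for all signed graphs), including your observation that a vertex carrying loops of both signs absorbs every signed graph, so that an s-core with more than two edges cannot contain one. Your identification of the residual hard family --- s-cores containing a negative digon, a negative loop and a positive loop, with the two loops necessarily on distinct vertices --- is correct and is precisely the family the paper describes in Section~\ref{sec:missing}. Since you candidly state that your exclusion-gadget programme stops short of a theorem, there is no error to flag, only the fact that the proposal establishes nothing beyond Theorem~\ref{thm:dicho}.

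One concrete correction to your assessment of where the difficulty lies: you predict your reduction \emph{breaks down} when the digon's loopless neighbourhood is sparse, for instance when the loops attach directly at the digon, leaving a loopless sub-target of only two edges. But that is exactly the smallest open case, the signed graph $(D,\Pi)$ of Figure~\ref{fig:D2}, and the paper settles it in Proposition~\ref{prop:D2} by a different mechanism than the one you propose: instead of switching-invariant source gadgets feeding a list problem, it passes to the switching graph $P(D,\Pi)$ via Proposition~\ref{prop:permequiv} (which makes switching-robustness automatic) and applies the indicator construction of Theorem~\ref{thm:indicator}. The indicator of Figure~\ref{fig:D2} selects precisely the positive non-loop edges of $P(D,\Pi)$, yielding $K_4$ minus an edge, whose core is $K_3$, and NP-completeness follows from Theorem~\ref{thm:HN}. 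Note that the indicator is itself the ``exclusion gadget'' you were seeking: by demanding an exact signed pattern between its poles (here exploiting the positive loop on its internal vertex), it forbids images through the loop vertices without any list machinery. If you wish to pursue the conjecture, the more promising uniform strategy suggested by the paper is therefore a case analysis on attachment patterns of the digon to the two loops, supplying for each an indicator on $P(H,\Pi)$ whose result is loopless and nonbipartite --- rather than the loop-forcing list gadgets, which give strictly weaker control.
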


Note that the polynomial half of Conjecture~\ref{conj} is proved in the proof of Theorem~\ref{thm:dicho}, that is, the polynomial case holds for all signed graphs.

\subsection{Structure of the paper}

The paper is organized as follows. In Section~\ref{sec:Prelim}, we present preliminary results on both kinds of signed graph homomorphisms, some of which are fundamental for our subsequent proofs. In particular, we show that the problem of testing the existence of an s-homomorphism between signed graphs can be captured through a special construction, \emph{switching graphs}, using ec-homomorphisms. In Section~\ref{sec:CSP}, we prove the equivalence between a dichotomy for ec-homomorphism problems for bipartite signed graphs and a dichotomy for all of CSP. We then prove our main result, Theorem~\ref{thm:dicho}, in Section~\ref{sec:main}. Finally, in Section~\ref{sec:missing}, we conclude with some remarks and questions.

\section{Preliminaries}\label{sec:Prelim}

In this section, we give our algorithm for testing switching equivalence of two signatures, provide an equivalent definition for s-homomorphisms, introduce the switching graph construction, and connect our work to vertex-colourings of signed graphs.

\subsection{A certifying algorithm for switching equivalence}\label{sec:switching-eq}

The proof of the following proposition is a certifying algorithm obtained by generalizing a method from~\cite{H53}.

\begin{prop}\label{prop:equiv}
Given two signed graphs $(G, \Sigma)$ and $(G, \Sigma')$, it can be decided in polynomial time whether $\Sigma \equiv \Sigma'$.
\end{prop}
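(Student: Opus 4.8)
The plan is to exploit the characterization stated earlier in the excerpt: two signatures $\Sigma$ and $\Sigma'$ on the same graph $G$ are switching equivalent if and only if their symmetric difference $\Sigma \triangle \Sigma'$ is an edge-cut of $G$, equivalently if and only if they induce the same sign on every cycle. Rather than test this cycle condition directly (there are exponentially many cycles), I would work with a spanning structure of $G$. First I would reduce to the connected case: switching is a purely local operation and $\Sigma \equiv \Sigma'$ holds if and only if it holds on each connected component of $G$, so I may assume $G$ is connected and simply run the procedure below on each component.

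The core idea is that a switching at a set $S \subseteq V(G)$ of vertices changes exactly the edges in the edge-cut $[S, V(G)\setminus S]$, so deciding $\Sigma \equiv \Sigma'$ amounts to asking whether the ``difference set'' $D = \Sigma \triangle \Sigma'$ can be realized as such a cut. The concrete algorithm, generalizing the method from~\cite{H53}, is to fix a spanning tree $T$ of $G$ and determine the unique candidate switching set. Root $T$ at an arbitrary vertex $r$, and process vertices in a top-down (BFS/DFS) order: assign $r$ a switching bit, then for each non-root vertex $w$ with tree-parent $u$, set the switching bit of $w$ so that the tree edge $uw$ is correctly handled, i.e.\ so that switching turns the sign of $uw$ under $\Sigma$ into its sign under $\Sigma'$ exactly when $uw \in D$. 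This forces a unique switching set $S$ (up to complementation) consistent on all of $T$. Having computed $S$, I would then verify, in a single linear pass over the non-tree edges, that switching at $S$ also corrects every non-tree edge; that is, for each edge $e = xy \notin T$, the edge $e$ lies in $D$ if and only if exactly one of $x,y$ lies in $S$. If all non-tree edges pass this check, output ``$\Sigma \equiv \Sigma'$'' together with the certificate $S$; if any edge fails, output ``not equivalent.''

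The correctness argument, which I would include briefly, has two directions. If the verification succeeds, then switching $(G,\Sigma)$ at $S$ produces exactly $(G,\Sigma')$, so $\Sigma \equiv \Sigma'$ and $S$ is a positive certificate. Conversely, if $\Sigma \equiv \Sigma'$, then some switching set realizes the change; since any switching set is determined on the spanning tree by how it must act on tree edges, it must coincide (up to global complementation) with the $S$ our procedure computes, and hence the non-tree verification necessarily succeeds. For a negative certificate, I would point to the non-tree edge $e$ that fails: the tree path between its endpoints together with $e$ forms a cycle on which $\Sigma$ and $\Sigma'$ induce different signs, and such a cycle is precisely a short, checkable witness that no switching can reconcile the two signatures. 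The whole procedure visits each vertex and edge a constant number of times, so it runs in time linear in $|V(G)| + |E(G)|$, certainly polynomial.

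The main obstacle is not the running time but packaging the output as a genuine \emph{certificate} in both cases, which is the point of calling the algorithm ``certifying.'' The positive side is clean, since the switching set $S$ is self-verifying. The delicate part will be arguing that the single failing non-tree edge yields a short, independently verifiable negative witness, and that this witness is correct regardless of the (arbitrary) choices of root and spanning tree made by the algorithm; I expect to phrase the negative certificate intrinsically as a cycle of $G$ whose $\Sigma$- and $\Sigma'$-signs disagree, so that its validity depends only on $G,\Sigma,\Sigma'$ and not on the internal data structures used.
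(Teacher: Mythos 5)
Your proposal is correct and follows essentially the same route as the paper: both reduce the question to deciding whether $\Sigma \triangle \Sigma'$ is an edge-cut of $G$, and both resolve that by a linear-time parity test that outputs either a switching set or a cycle containing an odd number of edges of the symmetric difference (hence of differing sign in the two signatures) as a certificate. The paper implements the test by contracting the components of $G \setminus (\Sigma \triangle \Sigma')$ and checking bipartiteness, whereas you propagate switching bits along a rooted spanning tree; these are interchangeable implementations of the same idea.
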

\begin{proof}
Let $B = \Sigma \triangle \Sigma'$. We wish to test if there is an edge-cut $(X,\overline{X})$ such that $B=E(X,\overline{X})$. We present a certifying algorithm that either finds such a cut or returns a cycle whose sign is different in $(G,\Sigma)$ and $(G,\Sigma')$. Let $W_1, W_2, \dots, W_t$ be the components of $G \setminus B$. As all edges of each $W_i$ have the same sign in both $\Sigma$ and $\Sigma'$, each component $W_i$ must be entirely contained by one side of the desired edge-cut. Thus, the aim is to partition the components into two sets $X$ and $Y$ so that the edges of $B$ have one end in $X$ and the other in $Y$. In other words, is it true that the graph resulting from contracting each $W_i$ to a single vertex is bipartite? If yes, then the edges of the resulting bipartite graph form the cut $B$. Switching on this cut transforms $\Sigma$ to $\Sigma'$. If no, then there is a cycle $C$ in $G$ containing an odd number of edges from $B$. This cycle is positive in exactly one of $\Sigma$ or $\Sigma'$ (and negative in the other), certifying that $\Sigma \not\equiv \Sigma'$.
\end{proof}
 
In the case that $\Sigma$ and $\Sigma'$ are not switching equivalent, the certifying cycle $C$ yields the following corollary due to Zaslavsky (as mentioned in the introduction).

\begin{cor}[Zaslavsky~\cite{Z82b}]\label{Zaslavsky}
Two signatures $\Sigma$ and $\Sigma'$ of the same graph are switching equivalent if and only if they induce the same cycle signs.
\end{cor}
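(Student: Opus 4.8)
The plan is to prove the two directions separately, with the reverse direction following almost immediately from the certifying algorithm of Proposition~\ref{prop:equiv}.

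For the forward direction, I would first show that a single switching preserves the sign of every cycle, and then iterate. Fix any cycle $C$ and consider switching at a vertex $v$. If $v\notin V(C)$, then no edge of $C$ is incident to $v$, so none of the signs along $C$ change. If $v\in V(C)$, then exactly two edges of $C$ are incident to $v$, and switching multiplies each of their signs by $-$; the product of the signs along $C$ is therefore multiplied by $(-1)^2=1$ and is unchanged. Hence a single switching preserves the sign of $C$, and since switching equivalence is a composition of such operations, $\Sigma\equiv\Sigma'$ implies that $\Sigma$ and $\Sigma'$ induce the same cycle signs.

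For the reverse direction, I would argue by contrapositive and invoke Proposition~\ref{prop:equiv}. Suppose $\Sigma\not\equiv\Sigma'$. Running the certifying algorithm of Proposition~\ref{prop:equiv} on $(G,\Sigma)$ and $(G,\Sigma')$, the attempt to two-colour the graph obtained by contracting each component of $G\setminus B$, where $B=\Sigma\triangle\Sigma'$, must fail; otherwise the resulting edge-cut would realize the switching from $\Sigma$ to $\Sigma'$. The algorithm therefore returns a cycle $C$ containing an odd number of edges of $B$. Since the edges of $B$ are exactly those whose sign differs between $\Sigma$ and $\Sigma'$, the product of signs along $C$ differs by the factor $(-1)^{|E(C)\cap B|}=-1$ between the two signatures, so $C$ is positive in one and negative in the other. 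Thus $\Sigma$ and $\Sigma'$ do not induce the same cycle signs, establishing the contrapositive.

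The only subtle point is the reverse direction, and the main work there has already been done inside the proof of Proposition~\ref{prop:equiv}: the content of the corollary is precisely that the certifying cycle produced on failure witnesses a genuine difference in cycle signs. Beyond unpacking the parity count $|E(C)\cap B|$, no separate argument is needed, so I expect the write-up to be short.
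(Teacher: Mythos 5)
Your proof is correct and follows essentially the same route as the paper: the forward direction is the standard parity observation that a switching flips an even number of edge-signs on each cycle, and the reverse direction is exactly the paper's derivation, which obtains the corollary from the certifying cycle $C$ produced by the algorithm of Proposition~\ref{prop:equiv} (a cycle meeting $B=\Sigma\triangle\Sigma'$ in an odd number of edges, hence of opposite signs in the two signatures). No substantive difference.
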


\subsection{An equivalent definition of s-homomorphisms}\label{sec:eq-defn}

Both homomorphisms of classical graphs and ec-homomorphisms of signed graphs are vertex-mappings that preserve adjacency (and the adjacency signs for the latter). Such a homomorphism $\varphi$ naturally defines the associated mapping of the edges (see the book \cite[page 4]{HNbook}). %$\varphi^\# : E(G) \to E(H)$ by $\varphi^\#(uv) = \varphi^\#(u) \varphi^\#(v)$ for all $uv \in E(G)$ for graphs (or the edge between $\varphi(u)$ and $\varphi(v)$ with the same sign as $uv$ in the case of ec-homomorphisms). (See page 4 in \cite{HNbook}.) For $\varphi^\#$ to be well-defined, we require our assumption that there are not two edges of the same sign between a pair of vertices.

In the case of s-homomorphisms, the edge-mapping $\varphi^\#$ may not be well-defined by the images of the vertices alone (when there are negative digons or a vertices with two loops of opposite signs), as $\varphi^\#$ may depend on the choice of switching. Recall the example of Figure~\ref{FixingEdges}. We had several possible switchings of $(G,\Sigma)$ so that $\varphi:V(G) \to V(H)$ induces a sign-preserving mapping, and therefore $\varphi^\#$ depends on both the vertex-mapping and the particular switching of $(G,\Sigma)$ one has chosen. Hence, we will, when required, explicitly specify the associated mapping $\varphi^\#$.

Next, we show that one may view an s-homomorphism as a vertex-mapping $\varphi$ with an associated edge-mapping $\varphi^\#$ that preserves the important structures in signed graphs: the adjacency of vertices and the signs of cycles.

\begin{theorem}\label{DefinByBalance}
Let $(G,\Sigma)$ and $(H,\Pi)$ be signed graphs. Then $\varphi: V(G) \to V(H)$, with a given associated mapping $\varphi^\#: E(G) \to E(H)$, is an s-homomorphism $(G,\Sigma) \sto (H,\Pi)$ if and only if $\varphi: G \to H$ is a homomorphism of the underlying graphs and, for every cycle $C$ of $G$, the sign of $C$ in $(G, \Sigma)$ is the same as the sign of the closed walk $\varphi^\#(C)$ in $(H, \Pi)$.
\end{theorem}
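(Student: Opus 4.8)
The statement is an ``if and only if'', so the plan is to prove each direction separately, relating the switching-based definition of s-homomorphisms (Definition~\ref{defn:signhom}) to the cycle-sign condition. The crucial fact underpinning both directions is Corollary~\ref{Zaslavsky}: two signatures of the same graph are switching equivalent precisely when they induce the same cycle signs. The secondary ingredient is the observation that switching does not change the sign of any cycle, which follows because switching at a vertex alters the sign of an even number of edges on any fixed cycle (each traversal of a switched vertex flips exactly two incident edges on the cycle, or none at all).

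For the forward direction, suppose $\varphi:(G,\Sigma)\sto(H,\Pi)$ with associated edge-mapping $\varphi^\#$. By Definition~\ref{defn:signhom} there is a switching $(G,\Sigma')$ of $(G,\Sigma)$ with $\varphi:(G,\Sigma')\ecto(H,\Pi)$ an ec-homomorphism; as remarked after the definition, we may keep $\Pi$ fixed. Since $\varphi$ is in particular a graph homomorphism, the underlying-graph condition is immediate. Now fix a cycle $C$ in $G$. Because an ec-homomorphism preserves edge-signs, the sign of each edge $e$ of $C$ in $(G,\Sigma')$ equals the sign of $\varphi^\#(e)$ in $(H,\Pi)$; taking the product over all edges of $C$, the sign of $C$ in $(G,\Sigma')$ equals the sign of the closed walk $\varphi^\#(C)$ in $(H,\Pi)$. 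Finally, since $(G,\Sigma')$ is a switching of $(G,\Sigma)$, the two signatures induce the same cycle signs, so the sign of $C$ in $(G,\Sigma)$ equals its sign in $(G,\Sigma')$, and the forward implication follows.

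For the reverse direction, assume $\varphi:G\to H$ is a graph homomorphism and that, for every cycle $C$ of $G$, the sign of $C$ in $(G,\Sigma)$ equals the sign of $\varphi^\#(C)$ in $(H,\Pi)$. The goal is to produce a switching $(G,\Sigma')$ of $(G,\Sigma)$ so that $\varphi:(G,\Sigma')\ecto(H,\Pi)$ is an ec-homomorphism, i.e.\ so that every edge $e$ of $G$ has the same sign in $\Sigma'$ as $\varphi^\#(e)$ has in $(H,\Pi)$. I would define a candidate target signature $\Sigma^\ast$ by pulling back signs along $\varphi^\#$: set the sign of each edge $e$ of $G$ in $\Sigma^\ast$ to equal the sign of $\varphi^\#(e)$ in $(H,\Pi)$. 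By construction $\varphi:(G,\Sigma^\ast)\ecto(H,\Pi)$ is an ec-homomorphism, so it suffices to show $\Sigma^\ast\equiv\Sigma$. By Corollary~\ref{Zaslavsky}, this reduces to checking that $\Sigma^\ast$ and $\Sigma$ induce the same cycle signs. But for any cycle $C$, the sign of $C$ in $(G,\Sigma^\ast)$ is the product of the signs of $\varphi^\#(e)$ over $e\in C$, which is exactly the sign of the closed walk $\varphi^\#(C)$ in $(H,\Pi)$; by hypothesis this equals the sign of $C$ in $(G,\Sigma)$. Hence the cycle signs agree, so $\Sigma^\ast\equiv\Sigma$, giving the desired switching $(G,\Sigma')=(G,\Sigma^\ast)$.

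\textbf{Main obstacle.}
The step requiring the most care is the reverse direction, specifically the subtlety of the associated edge-mapping $\varphi^\#$. As emphasized in the text before the theorem, when $G$ has negative digons or a vertex carrying two loops of opposite signs, the edge-mapping is genuinely extra data: a single adjacent pair in $H$ may be joined by two edges of opposite sign, so $\varphi^\#$ need not be determined by $\varphi$ alone. The pull-back signature $\Sigma^\ast$ is therefore well-defined \emph{only} because $\varphi^\#$ is specified as part of the hypothesis, and one must take the sign of $\varphi^\#(e)$ rather than ``the sign of the image edge.'' I would also make explicit that a closed walk in $H$ (as opposed to a simple cycle) has a well-defined sign and that the hypothesis is stated for the closed walk $\varphi^\#(C)$, so that edges of $H$ traversed with multiplicity are counted correctly in the product; this is what makes the cycle-sign bookkeeping in both directions line up.
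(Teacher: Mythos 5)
Your proof is correct and follows essentially the same route as the paper's: the forward direction uses that switching preserves cycle signs, and the reverse direction pulls back $\Pi$ along $\varphi^\#$ to get a candidate signature and invokes Corollary~\ref{Zaslavsky} (Zaslavsky's cycle-sign characterization of switching equivalence) to show it is switching equivalent to $\Sigma$. Your added remarks on why $\varphi^\#$ must be given as data are a faithful elaboration of the discussion the paper places just before the theorem.
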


\begin{proof}
First assume that $\varphi$ is an s-homomorphism of $(G, \Sigma)$ to $(H, \Pi)$. Then, by the definition, $\varphi$ is an ec-homomorphism of $(G, \Sigma')$ to $(H, \Pi)$ for some signature $\Sigma'$ that is switching equivalent to $\Sigma$. Clearly, the sign of $C$ in $(G, \Sigma')$ is the same as the sign of $\varphi^\#(C)$ in $(H, \Pi)$, but it is also the same as the sign of $C$ in $(G, \Sigma)$ since $\Sigma$ and $\Sigma'$ are switching equivalent. 

Now, suppose that the condition holds for each cycle. Let $\Sigma'$ be the inverse image of $\Pi$ by $\varphi^\#$. Our claim is that $\Sigma'$ and $\Sigma$ are switching equivalent; this would prove that $\varphi$
is an s-homomorphism of $(G, \Sigma)$ to $(H, \Pi)$. To prove our claim,
using Corollary~\ref{Zaslavsky}, it is enough to show that each cycle has the same sign in $(G, \Sigma)$ and $(G, \Sigma')$. But this is indeed the case, as they both are the same as the sign of $\varphi^\#(C)$ in $(H, \Pi)$, one by the condition of the theorem, the other from the definition of $\Sigma'$.
\end{proof}

Returning to the example in Figure~\ref{FixingEdges}, the vertex-mapping $g$ with $g(u)=b, g(v)=g(z)=c, g(x)=g(y)=a$ maps only one edge $g_3$ to the negative digon. In light of Theorem~\ref{DefinByBalance}, we must have $g^\#(g_3) = h_3$.

\subsection{Switching graphs}\label{sec:perm}

We now describe a construction that is crucial in our proofs.

\begin{defn} Let $(G,\Sigma)$ be a signed graph. The \emph{switching graph} of $(G,\Sigma)$ is a signed graph denoted $P(G,\Sigma)$ and constructed as follows.
\begin{itemize}
  \item[(i)] For each vertex $u$ in $V(G)$ we have two vertices $u_0$ and $u_1$ in $P(G,\Sigma)$.
  \item[(ii)] For each edge $e$ between $u$ and $v$ in $G$, we have four edges between $u_i$ and $v_j$ ($i,j \in \{ 0, 1 \}$) in $P(G,\Sigma)$, with the edges between $u_i$ and $v_i$ having the same sign as $e$ and the edges between $u_i$ and $v_{1-i}$ having the opposite sign as $e$ ($i \in \{0,1\}$). (In particular, loops do not change sign.)
\end{itemize}
\end{defn}

See Figures~\ref{fig:Zk} and~\ref{fig1} for examples of signed graphs and their switching graphs. The notion of switching graph was defined by Brewster and Graves in~\cite{BG09} in a more general setting related to permutations (they used the term \emph{permutation graph}). Their work built on that of Klostermeyer and MacGillivray~\cite{KM04} who used a similar definition in the context of digraphs. The construction is also used in~\cite{OPS16}. Zaslavsky used a construction similar (but different) to that of switching graphs~\cite{Z82a}.

Let $(G,\Sigma)$ be a signed graph and $(G,\Sigma')$ be any switching equivalent signed graph, that is, $\Sigma\equiv \Sigma'$. A fundamental property of $P(G,\Sigma)$ is that it contains as a subgraph both $(G,\Sigma)$ and $(G,\Sigma')$. That is, it contains as a subgraph all signed graphs that are switching equivalent to $(G,\Sigma)$. The following proposition allows us to transform questions about s-homomorphisms to the setting of ec-homomorphisms.

\begin{prop}\label{prop:permequiv}
Let $(G, \Sigma)$ and $(H, \Pi)$ be two signed graphs. The following are equivalent.
\begin{list}{(\alph{enumi})}{\usecounter{enumi}}
\item $(G, \Sigma) \sto (H,\Pi)$,
\item $(G, \Sigma) \ecto P(H,\Pi)$, 
\item $P(G, \Sigma) \ecto P(H,\Pi)$.
\end{list}
\end{prop}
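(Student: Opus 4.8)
The plan is to prove the cycle of implications (a) $\Rightarrow$ (c) $\Rightarrow$ (b) $\Rightarrow$ (a), after first isolating two structural properties of the switching-graph operator $P$ that carry all the real content. The first is \emph{functoriality}: any ec-homomorphism $f:(A,\Theta)\ecto (B,\Xi)$ lifts to an ec-homomorphism $\hat f:P(A,\Theta)\ecto P(B,\Xi)$ via $\hat f(u_i)=f(u)_i$. This is a one-line sign check: in $P(A,\Theta)$ the edge between $u_i$ and $v_j$ carries sign $\theta(uv)$ if $i=j$ and $-\theta(uv)$ if $i\neq j$, the image edge between $f(u)_i$ and $f(v)_j$ obeys the same parity rule, and its base sign in $(B,\Xi)$ equals $\theta(uv)$ because $f$ is an ec-homomorphism. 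The second property is \emph{switching invariance}: if $\Sigma\equiv\Sigma'$, with $\Sigma'$ obtained from $\Sigma$ by switching at a vertex set $S$, then $\tau(u_i)=u_i$ for $u\notin S$ and $\tau(u_i)=u_{1-i}$ for $u\in S$ is an ec-isomorphism $P(G,\Sigma')\to P(G,\Sigma)$. The verification is a case analysis on how many endpoints of an edge lie in $S$, checking that flipping the index on the $S$-side exactly compensates for the sign flip produced by switching.

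With these in hand the implications are short. For (a) $\Rightarrow$ (c), if $(G,\Sigma)\sto(H,\Pi)$ then, using the remark that we may fix $\Pi'=\Pi$, there is a signature $\Sigma'\equiv\Sigma$ with an ec-homomorphism $(G,\Sigma')\ecto(H,\Pi)$; lift it by functoriality to $P(G,\Sigma')\ecto P(H,\Pi)$ and precompose with the isomorphism $P(G,\Sigma)\cong P(G,\Sigma')$ given by switching invariance. For (c) $\Rightarrow$ (b), the map $u\mapsto u_0$ is an ec-embedding of $(G,\Sigma)$ into $P(G,\Sigma)$, since the edges $u_0v_0$ reproduce the signs of $G$; composing with the given ec-homomorphism yields $(G,\Sigma)\ecto P(H,\Pi)$. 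For (b) $\Rightarrow$ (a), suppose $\psi:(G,\Sigma)\ecto P(H,\Pi)$ and write $\psi(u)=h_{i(u)}$. Define $\varphi(u)=h$ by forgetting the index, and let $S=\{u : i(u)=1\}$; switching $(G,\Sigma)$ at $S$ produces a signature $\Sigma'$, and a parity check (matching the two cases ``endpoints receive equal/unequal indices'' against ``both or neither switched / exactly one switched'') shows $\varphi:(G,\Sigma')\ecto(H,\Pi)$, which is exactly an s-homomorphism $(G,\Sigma)\sto(H,\Pi)$.

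The only delicate point throughout is the bookkeeping that identifies the index $i\in\{0,1\}$ attached to a vertex of a switching graph with the choice of ``switched or not switched'' at that vertex; every step reduces to confirming that the parity rule $i=j$ governing signs in $P$ is in exact correspondence with the sign changes induced by switching at a vertex set. I expect this index/switching-set dictionary to be the main obstacle, in the sense that it is where the argument must be set up carefully, though each individual verification is a routine sign computation. No step should require more than the two structural facts above together with Definition~\ref{defn:signhom} and the construction of $P$.
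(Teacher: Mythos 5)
Your proof is correct and follows essentially the same route as the paper's: the paper establishes (a)$\Leftrightarrow$(b) and (b)$\Leftrightarrow$(c) directly, but its constructions --- sending $u$ to $v_{i,0}$ or $v_{i,1}$ according to which side of the edge-cut $\Sigma\triangle\Sigma'$ contains $u$, and lifting $\varphi(u)=v_{i,j}$ to $\psi(u_0)=v_{i,j}$, $\psi(u_1)=v_{i,1-j}$ --- are precisely your switching-invariance and functoriality lemmas in disguise. The index/switching-set dictionary you single out as the crux is indeed the entire content of the paper's argument as well.
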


\begin{proof}
(a) $\Longleftrightarrow$ (b): 
Let the vertices of $H$ be $\{ v_1, v_2, \dots, v_n \}$ and let the corresponding
paired vertices in $P(H,\Pi)$ be $v_{i,0}$ and $v_{i,1}$ for $i=1,2, \dots, n$.
Suppose $(G, \Sigma) \sto (H,\Pi)$. Thus, there is 
a signed graph $(G,\Sigma')$ with $\Sigma'\equiv\Sigma$ and an ec-homomorphism $\varphi: (G, \Sigma') \ecto (H, \Pi)$.

Let $E(X,\overline{X})= \Sigma \triangle \Sigma'$ be the edge-cut certifying the switching equivalence of the two signatures. Define $\psi: (G,\Sigma) \ecto P(H,\Pi)$ by
$$
\psi(u) = \left\{ \begin{array}{ll}
v_{i,0} & \mbox{ if } \varphi(u) = v_i \mbox{ and } u \in X \\
v_{i,1} & \mbox{ if } \varphi(u) = v_i \mbox{ and } u \in \overline{X} 
\end{array} .\right.
$$
It is straightforward to verify that $\psi$ is an ec-homomorphism.

For the converse, we have $P(H,\Pi) \sto (H,\Pi)$ (switching on the edges between the two copies of $(H,\Pi)$ and projecting $v_{i,j} \mapsto v_i$ is an s-homomorphism).

(b) $\Longleftrightarrow$ (c): First observe that $(G,\Sigma) \subseteq P(G,\Sigma)$.
Thus, $P(G,\Sigma) \ecto P(H,\Pi)$ implies $(G,\Sigma) \ecto P(H,\Pi)$. On the other hand, suppose that $\varphi: (G,\Sigma) \ecto P(H,\Pi)$ is an ec-homomorphism. We construct an ec-homomorphism $\psi: P(G,\Sigma) \ecto P(H,\Pi)$ as follows: if $\varphi(u) = v_{i,j}$, then $\psi(u_0) = v_{i,j}$ and $\psi(u_1) = v_{i,1-j}$.
\end{proof}

As s-cores of signed graphs are our fundamental object of study, we remark that Theorem~15 of~\cite{BG09} characterizes the s-cores of signed graphs in terms of switching graphs. (There is a technical requirement we must make on the $P(G,\Sigma)$ construction in order to apply the following theorem. If for some vertex $v$ of $G$, $v_0$ and $v_1$ have the same sets of positive and negative neighbours in $P(G,\Sigma)$, then we delete vertex $v_1$ from $P(G,\Sigma)$. This corresponds to the case where the multiset of colours of incident edges of a loop-free vertex $v$ are invariant under switching, that is, $v$ is an isolated vertex, or $v$ is incident only with negative digons.)

\begin{theorem}[\protect{Brewster and Graves~\cite[Theorem~15]{BG09}}]
Let $(G,\Sigma)$ be a signed graph. Then, $(G,\Sigma)$ is an s-core if and only if $P(G,\Sigma)$ is an ec-core.
\end{theorem}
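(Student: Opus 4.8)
The plan is to work through the switching graph's canonical involution together with the correspondence of Proposition~\ref{prop:permequiv}. Write $P=P(G,\Sigma)$ and let $\iota\colon P\to P$ be the map $u_i\mapsto u_{1-i}$; a direct check shows $\iota$ is an ec-automorphism of $P$ that \emph{reverses signs}, in the sense that for any two vertices $x,y$ the edge $\iota(x)y$ exists exactly when $xy$ does and carries the opposite sign. Let $\pi\colon P\to(G,\Sigma)$ be the projection $u_i\mapsto u$, which is an s-homomorphism, and let $\iota_0,\iota_1\colon(G,\Sigma)\hookrightarrow P$ be the two ec-embeddings $u\mapsto u_0$ and $u\mapsto u_1$. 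Since for a finite signed graph an (ec- or s-)endomorphism is an automorphism precisely when it is surjective, I will restate the theorem as: every ec-endomorphism of $P$ is surjective if and only if every s-endomorphism of $(G,\Sigma)$ is surjective. The construction in the proof of Proposition~\ref{prop:permequiv} provides the basic dictionary: an s-endomorphism $\varphi$ of $(G,\Sigma)$ lifts to an ec-endomorphism $\Psi$ of $P$ commuting with $\iota$ (an \emph{$\iota$-equivariant} map), with $\Psi(u_i)\in\{\varphi(u)_0,\varphi(u)_1\}$, so that the image of $\Psi$ is exactly $\{w_0,w_1:w\in\varphi(V(G))\}$; conversely, any $\iota$-equivariant ec-endomorphism descends through $\pi$ to an s-endomorphism.

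The direction ``$P$ an ec-core $\Rightarrow$ $(G,\Sigma)$ an s-core'' is the easy one. Arguing by contraposition, a non-surjective s-endomorphism $\varphi$ of $(G,\Sigma)$ lifts to the $\iota$-equivariant $\Psi$ above, whose image misses $w_0$ and $w_1$ for every $w\notin\varphi(V(G))$; hence $\Psi$ is a non-surjective ec-endomorphism of $P$ and $P$ is not an ec-core.

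The reverse direction ``$(G,\Sigma)$ an s-core $\Rightarrow$ $P$ an ec-core'' is where the real work lies, and the main obstacle is that an arbitrary ec-endomorphism of $P$ need not be $\iota$-equivariant, so it cannot be pushed down by $\pi$ directly. Again I argue by contraposition: starting from a non-surjective ec-endomorphism $\psi$ of $P$, I aim to manufacture a non-surjective s-endomorphism of $(G,\Sigma)$. First consider $\phi:=\pi\circ\psi\circ\iota_0$; if $\phi$ is not surjective we are done. Otherwise $\phi$ is a bijection, and then the map $\tilde\psi$ defined by $\tilde\psi(u_0)=\psi(u_0)$ and $\tilde\psi(u_1)=\iota(\psi(u_0))$ is, using the sign-reversing property of $\iota$, an $\iota$-equivariant ec-endomorphism of $P$ that agrees with $\psi$ on the $0$-copy; since $\phi$ is a bijection, $\tilde\psi$ is an ec-automorphism. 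Replacing $\psi$ by $\rho:=\tilde\psi^{-1}\circ\psi$ then yields a non-surjective ec-endomorphism that fixes every $u_0$, so the vertex it misses must be some $u^\ast_1$ in the $1$-copy.

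It remains to exploit this normal form. Setting $g:=\pi\circ\rho\circ\iota_1$ (an s-endomorphism), if $g$ is non-surjective we are done; otherwise $g$ is a bijection and the preimage $u$ of $u^\ast$ satisfies $\rho(u_1)=u^\ast_0$, since $u^\ast_1$ is missed. Because $\rho$ fixes the $0$-copy and preserves signed adjacency, comparing the edges from $u_1$ and from $u^\ast_0$ to the $0$-copy forces $u$ and $u^\ast$ to have the same neighbourhood in $G$ with all edge-signs reversed; that is, switching at $u^\ast$ turns it into a true twin of $u$. Folding $u^\ast$ onto $u$ after this switch is then the desired non-surjective s-endomorphism of $(G,\Sigma)$, completing the contrapositive. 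The delicate points I expect to spend care on are exactly this last neighbourhood comparison --- extracting equality, not merely containment, of the neighbourhoods, which uses that $g$ is a degree-preserving automorphism --- together with the degenerate vertices incident only to negative digons (or isolated), for which $u_0$ and $u_1$ coincide; these are precisely the vertices removed by the reduced $P(G,\Sigma)$ construction described before the statement, and handling them is what makes that reduction necessary.
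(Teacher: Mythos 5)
The paper does not actually prove this statement --- it is imported from Brewster and Graves~\cite{BG09} as Theorem~15 --- so there is no internal proof to compare yours against, and I assess your argument on its own. The strategy is sound and, in the hard direction, genuinely well chosen: the easy direction is exactly the lift from Proposition~\ref{prop:permequiv}; and in the converse, normalising an arbitrary non-surjective ec-endomorphism $\psi$ by composing with the inverse of its $\iota$-equivariant extension $\tilde\psi$, so that the resulting $\rho$ fixes the $0$-copy pointwise and must therefore miss a vertex $u^*_1$, is the right move. The subsequent twin extraction also works: the containments $N_s(u)\subseteq N_{-s}(u^*)$ follow from applying $\rho$ to the edges $u_1v_0$, and they are upgraded to equalities because $g$ is a bijective s-endomorphism, hence an automorphism of the underlying graph that moreover preserves digon and loop counts (s-homomorphisms preserve the signs of closed walks of length $1$ and $2$), so the signed degrees of $u$ and $u^*$ agree.

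The one point that is not closed is the one you flag at the end, and it is not merely cosmetic: the theorem is stated for the \emph{reduced} switching graph (the paper deletes $v_1$ whenever $v_0$ and $v_1$ have identical signed neighbourhoods), and your entire apparatus --- $\iota$, the equivariant lift, $\tilde\psi$ --- is built on the unreduced $P(G,\Sigma)$, where the statement is in fact false (a single negative digon is an s-core, but its unreduced switching graph retracts onto two of its four vertices). Concretely, two things must be added: (i) $\iota$ must be redefined to fix the surviving $v_0$ of each degenerate vertex, and one must check it is still a sign-reversing automorphism there (it is, precisely because such $v_0$ have sign-symmetric neighbourhoods); and (ii) in the final step your fold of $u^*$ onto $u$ is non-surjective only when $u\neq u^*$, and the case $u=u^*$ --- which your write-up does not mention --- is possible a priori and is excluded only because $\rho(u^*_1)=u^*_0$ with $\rho$ fixing the $0$-copy forces $u^*$ to be incident solely to negative digons (or to carry both loops), i.e.\ to be exactly a vertex removed by the reduction. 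With those few lines supplied, the proof is complete.
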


\subsection{Signed graph vertex-colourings}\label{sec:ZCol}
Zaslavsky introduced and studied vertex-colourings of signed graphs in a series of papers~\cite{Z82a,Z82c,Z84}. In this section, we formulate these colourings in the language of homomorphisms of signed graphs to particular targets. These targets play the same role (for signed graph colourings) that complete graphs play for colourings of classical graphs. (Recall that a homomorphism of a graph to the complete graph $K_k$ corresponds to a proper $k$-vertex-colouring.)

In fact, Zaslavsky defined two types of colouring. Let $k$ be a positive integer. A \emph{proper $k$-colouring} of a signed graph $(G,\Sigma)$ is a vertex-mapping $\phi:V(G)\to\{0, \pm 1,\dots,\pm k\}$ with the property that for any two adjacent vertices $u$ and $v$ of $G$, $\phi(u)\cdot \sigma(e)\neq \phi(v)$ where $e$ is an edge incident with $u$ and $v$ and $\sigma(e)$ is its sign. Thus, two vertices joined by a positive edge cannot receive the same colour, while two vertices joined by a negative edge cannot receive opposite colours. Further, the colouring is \emph{zero-free} if it maps no vertex to $0$.

We now formulate signed graph colourings as homomorphisms. Let $k$ be a positive integer. The signed graph $(Z_k, \Upsilon_k)$ has vertex set $\{ 0, 1, \dots, k \}$ and edges consisting of a negative digon between all pairs of distinct vertices and negative loops on $\{ 1, 2, \dots, k \}$. The signed graph $(Z^*_k, \Upsilon^*_k)$ is obtained from $(Z_k, \Upsilon_k)$ by deleting $0$ and its incident edges. See Figure~\ref{fig:Zk} for examples. The following proposition is immediate using Proposition~\ref{prop:permequiv}.

\begin{prop}\label{prop:Scolouring}
Let $(G, \Sigma)$ be a signed graph. The following three statements are equivalent.
\begin{list}{(\alph{enumi})}{\usecounter{enumi}}
  \item $(G,\Sigma)$ admits a proper $k$-colouring.
  \item $(G,\Sigma) \sto (Z_k, \Upsilon_k)$.
  \item $(G,\Sigma) \ecto P(Z_k, \Upsilon_k)$.
\end{list}
In addition, the following three statements are equivalent.
\begin{list}{(\alph{enumi})}{\usecounter{enumi}\setcounter{enumi}{3}}
  \item $(G,\Sigma)$ admits a proper zero-free $k$-colouring.
  \item $(G,\Sigma) \sto (Z^*_k, \Upsilon^*_k)$.
  \item $(G,\Sigma) \ecto P(Z^*_k, \Upsilon^*_k)$.
\end{list}
\end{prop}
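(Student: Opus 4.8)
The plan is to get almost everything for free from Proposition~\ref{prop:permequiv} and to isolate the one genuine computation, namely identifying the switching graph $P(Z_k,\Upsilon_k)$. Indeed, taking $(H,\Pi)=(Z_k,\Upsilon_k)$ in Proposition~\ref{prop:permequiv} immediately yields the equivalence of $(G,\Sigma)\sto(Z_k,\Upsilon_k)$ and $(G,\Sigma)\ecto P(Z_k,\Upsilon_k)$, which is exactly (b)$\iff$(c); taking $(H,\Pi)=(Z^*_k,\Upsilon^*_k)$ gives (e)$\iff$(f). So the only real work is to connect the combinatorial notion of a proper $k$-colouring to one of the homomorphism conditions, and I would connect it directly to the ec-condition, proving (a)$\iff$(c) and, verbatim, (d)$\iff$(f).

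The key step is to exhibit an explicit ec-isomorphism between $P(Z_k,\Upsilon_k)$ and the ``colour graph'' $T_k$ on vertex set $\{0,\pm1,\dots,\pm k\}$ that encodes the colouring constraints. Reading off the definition, a positive edge $uv$ forces $\phi(u)\neq\phi(v)$ and a negative edge forces $\phi(u)\neq-\phi(v)$, so a vertex-mapping is a proper $k$-colouring precisely when it is an ec-homomorphism into the signed graph $T_k$ having a positive edge between $c$ and $c'$ iff $c\neq c'$ and a negative edge between $c$ and $c'$ iff $c\neq -c'$. Sorting out the cases, $T_k$ has a negative loop at each nonzero colour and none at $0$, a single positive edge between $c$ and $-c$, and a negative digon between every remaining pair (including every pair $\{0,i\}$). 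I would then identify the vertices of $P(Z_k,\Upsilon_k)$ with the colours by sending the surviving copy of $0$ to $0$ and the two copies $i_0,i_1$ of each $i\in\{1,\dots,k\}$ to $+i,-i$, and verify that the switching-graph construction reproduces exactly the edges of $T_k$: the negative loop on $i$ lifts to a negative loop on each of $i_0,i_1$ together with a positive edge between them (matching the negative loops at $\pm i$ and the lone positive edge between $+i$ and $-i$), while each negative digon of $(Z_k,\Upsilon_k)$ — being one positive and one negative edge — lifts under the four-edge rule to a full negative digon between the corresponding copies, as required. Once $P(Z_k,\Upsilon_k)\cong T_k$ is checked, (a)$\iff$(c) is immediate, and the zero-free case follows since $P(Z^*_k,\Upsilon^*_k)$ is just $T_k$ with the vertex $0$ deleted.

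The main obstacle I expect is not conceptual but the sign bookkeeping in the switching-graph construction: keeping straight which of the four lifted edges carries which sign, and correctly treating the loop (which does not change sign, hence stays a negative loop on each copy) as opposed to the digon (whose two oppositely signed edges lift to a genuine digon between copies). A secondary subtlety is the vertex $0$: since it is incident only with negative digons, its two copies have identical positive and negative neighbourhoods, so the technical convention on $P(G,\Sigma)$ stated above deletes $0_1$, leaving the correct $2k+1$ colours rather than $2k+2$ vertices. Everything else is mechanical once the isomorphism is in place.
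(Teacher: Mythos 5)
Your proposal is correct and follows the route the paper intends: the paper states the result as ``immediate using Proposition~\ref{prop:permequiv}'' (which gives (b)$\iff$(c) and (e)$\iff$(f)), and your explicit identification of $P(Z_k,\Upsilon_k)$ with the colour graph $T_k$ on $\{0,\pm 1,\dots,\pm k\}$ --- negative loops at nonzero colours, a positive edge between $c$ and $-c$, negative digons elsewhere, with $0_1$ removed as a twin of $0_0$ --- correctly supplies the remaining equivalence (a)$\iff$(c), matching the switching graphs shown in Figure~\ref{fig:Zk}. The sign bookkeeping you describe (loops lifting to two negative loops plus a positive cross edge; digons lifting to digons between all four pairs of copies) is exactly right.
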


\begin{figure}[!htpb]
\centering
  \begin{tikzpicture}[every loop/.style={}]
  \node[blackvertex] (u) at (0,0) {};
  \draw (u) node[left=0.2cm] {$1$};
  \node[blackvertex] (v) at (0,2) {};
  \draw (v) node[left=0.2cm] {$0$};
  \path[thick,red,dashed] (u)   edge[out=240,in=300,loop, min distance=10mm] node  {} (u);
  \draw[thick,blue] (u) to[bend right=20] (v);
  \draw[thick,red,dashed] (u) to[bend left=20] (v);
  \node at (0,-1.5) {$(Z_1,\Upsilon_1)$};
  
  \begin{scope}[xshift=3cm]
   \node[blackvertex] (u0) at (0,0) {};
   \node[blackvertex] (u1) at (2,0) {};
   \node[blackvertex] (v1) at (2,2) {};
   \draw[thick,blue] (u1) to[bend right=20] (v1)
                     (u0) to[bend right=13] (v1);
   \draw[thick,red,dashed] (u0) to[bend left=13] (v1)
                           (u1) to[bend left=20] (v1);  
   \path[thick,red,dashed] (u0)   edge[out=240,in=300,loop, min distance=10mm] node  {} (u0);
   \path[thick,red,dashed] (u1)   edge[out=240,in=300,loop, min distance=10mm] node  {} (u1);
   \draw[thick,blue] (u0)--(u1);
   \node at (1,-1.5) {The ec-core of $P(Z_1,\Upsilon_1)$};
  \end{scope}
  
  \begin{scope}[yshift=-5cm]
  \node[blackvertex] (u) at (0,0) {};
  \draw (u) node[left=0.2cm] {$2$};
  \node[blackvertex] (v) at (0,2) {};
  \draw (v) node[left=0.2cm] {$1$};
  \path[thick,red,dashed] (u)   edge[out=240,in=300,loop, min distance=10mm] node  {} (u);
  \path[thick,red,dashed] (v)   edge[out=60,in=120,loop, min distance=10mm] node  {} (v);
  \draw[thick,blue] (u) to[bend right=20] (v);
  \draw[thick,red,dashed] (u) to[bend left=20] (v);
  \node at (0,-1.5) {$(Z_2^*,\Upsilon_2^*)$};
  \end{scope}

  \begin{scope}[yshift=-5cm,xshift=3cm]
   \node[blackvertex] (u0) at (0,0) {};
   \node[blackvertex] (v0) at (0,2) {};
   \node[blackvertex] (u1) at (2,0) {};
   \node[blackvertex] (v1) at (2,2) {};
   \draw[thick,blue] (u0) to[bend right=20] (v0) (u1) to[bend right=20] (v1)
   (u0) to[bend right=10] (v1) (u1) to[bend right=10] (v0);
   \draw[thick,red,dashed] (u0) to[bend left=10] (v1) (u1) to[bend left=10] (v0)
   (u0) to[bend left=20] (v0) (u1) to[bend left=20] (v1);  
   \path[thick,red,dashed] (u0)   edge[out=240,in=300,loop, min distance=10mm] node  {} (u0);
   \path[thick,red,dashed] (u1)   edge[out=240,in=300,loop, min distance=10mm] node  {} (u1);
   \path[thick,red,dashed] (v0)   edge[out=60,in=120,loop, min distance=10mm] node  {} (v0);
   \path[thick,red,dashed] (v1)   edge[out=60,in=120,loop, min distance=10mm] node  {} (v1);  
   \draw[thick,blue] (u0)--(u1);
   \draw[thick,blue] (v0)--(v1);
   \node at (1,-1.5) {$P(Z_2^*,\Upsilon_2^*)$};
  \end{scope}

\end{tikzpicture}
\caption{The graphs $(Z_1,\Upsilon_1)$, $(Z_2^*,\Upsilon_2^*)$ and (the ec-cores of) their switching graphs.}\label{fig:Zk}
\end{figure}
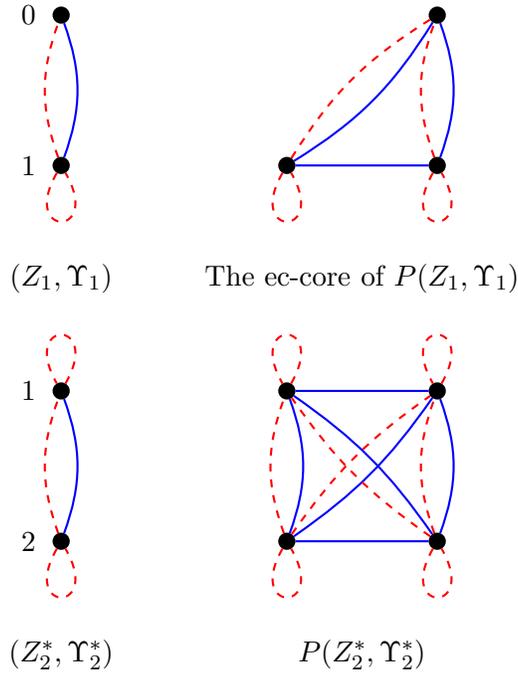

Since $(Z_k, \Upsilon_k)$ and $(Z^*_k, \Upsilon^*_k)$ do not contain positive loops, Theorem~\ref{thm:dicho} is powerful enough to classify the complexity of signed graph colourings. Specifically, for all positive $k$, the s-core of $(Z_k, \Upsilon_k)$ has at least three edges and thus the corresponding colouring problem is NP-complete. For zero-free colourings, the s-core of $(Z^*_k, \Upsilon^*_k)$ consists of a single negative loop when $k=1$, and at least three edges for $k \geq 2$. Hence, the corresponding colouring problem is polynomial-time in the former case and NP-complete in the latter.

\section{A dichotomy for ec-homomorphisms of bipartite signed graphs implies a dichotomy for all of CSP}\label{sec:CSP}

We now show the equivalence between the set of \echom{H,\Pi} problems where $(H,\Pi)$ is a signed graph, and all of CSP. We will require the two following retraction decision problems in the reduction proved below. Formally, let $H$ be a fixed graph and $(H,\Pi)$, a fixed signed graph. 

\medskip\noindent
\RET{H} \\
Instance: A graph $G$ containing $H$ as a subgraph.\\
Question: Is there a retraction of $G$ to $H$?

\medskip\noindent
\ecRET{H,\Pi} \\
Instance: A signed graph $(G,\Sigma)$ containing $(H,\Pi)$ as a subgraph.\\
Question: Is there a retraction $r:(G,\Sigma) \ecto (H,\Pi)$?
\medskip

An \emph{alternating path} is a (signed) path whose edges alternate negative and positive. Following the construction of Feder and Vardi (Theorem~10 of~\cite{FV98}), and the development of these ideas in the book~\cite[Chapter~5.3]{HNbook}, we have the following theorem which shows that a dichotomy theorem for the set of \echom{H,\Pi} problems would provide a positive answer to the Feder and Vardi Dichotomy Conjecture.

\begin{theorem}\label{thm:CSP}
For each CSP template $T$, there is a signed graph $(H,\Pi)$ such that \echom{H,\Pi} 
and \CSP{T} are polynomially equivalent. Moreover, $(H,\Pi)$ can be chosen to be bipartite and homomorphic to an alternating path.
\end{theorem}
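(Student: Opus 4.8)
The plan is to mimic the classical Feder--Vardi argument (Theorem~10 of~\cite{FV98} and~\cite[Chapter~5.3]{HNbook}) that encodes an arbitrary CSP as a digraph homomorphism problem, and to replace arc orientations by edge signs: each arc will be encoded by an alternating path whose sign pattern ``points'' in the direction of the arc. Since ec-homomorphisms preserve signs exactly, this sign pattern plays for signed graphs precisely the role that orientation plays for digraphs. First I would invoke the known reduction chain to obtain, for the given template $T$, a \emph{balanced} digraph $D$ --- one admitting a homomorphism to a directed path $P_n = 0 \to 1 \to \cdots \to n$ --- such that \CSP{T} and \HOM{D} are polynomially equivalent; this is exactly the content of the Feder--Vardi construction and its refinement through retraction problems in~\cite[Chapter~5.3]{HNbook}. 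Next I would define a conversion $\Phi$ from digraphs to signed graphs: $\Phi(D)$ keeps the vertices of $D$ and replaces each arc $a=(u\to v)$ by a fresh vertex $w_a$ together with a positive edge $u w_a$ and a negative edge $w_a v$, i.e.\ by an alternating path of length two whose sign sequence, read from the tail, is $(+,-)$ and read from the head is $(-,+)$.

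Applying $\Phi$ to both $D$ and to input digraphs, I would show $D' \to D$ if and only if $\Phi(D') \ecto \Phi(D)$: a digraph homomorphism $f$ lifts to the ec-homomorphism sending $w_a \mapsto w_{f(a)}$, while sign preservation forces each image gadget to have its tail and head in the correct roles. Setting $(H,\Pi) := \Phi(D)$, the required structural properties then come for free. Because $D$ admits a homomorphism to $P_n$, the signed graph $(H,\Pi)$ admits an ec-homomorphism to $\Phi(P_n)$, which is the path $0, w_0, 1, w_1, \dots, n$ carrying the alternating sign sequence $+,-,+,-,\dots$; that is, $(H,\Pi)$ is homomorphic to an alternating path and hence bipartite, giving the ``moreover'' clause. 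For the polynomial equivalence itself, one direction is the encoding $\Phi$ just described, reducing \CSP{T} to \echom{H,\Pi}; the reverse reduction \echom{H,\Pi} $\leq_p$ \CSP{T} holds because \echom{H,\Pi} is itself a constraint satisfaction problem --- its template being $(H,\Pi)$ viewed as a structure with two symmetric binary relations, one per sign --- and the conversion $\Phi$ is faithfully decodable, exactly as in the Feder--Vardi framework.

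The hard part will be the backward direction of the gadget equivalence: showing that an arbitrary ec-homomorphism $\Phi(D') \ecto \Phi(D)$ restricts to a genuine digraph homomorphism, rather than, say, collapsing an original vertex of $D'$ onto some subdivision vertex $w_a$, or reversing a gadget. This is the usual rigidity issue in indicator and subdivision reductions, and it is precisely controlled by the leveling inherited from the homomorphism to the directed (hence alternating) path: the balance of $D$ assigns every vertex a level, placing original and subdivision vertices on opposite sides of the bipartition and forcing gadgets to map to gadgets while respecting their sign-encoded direction. I expect the cleanest way to make this rigorous is to route the argument through retractions, establishing \CSP{T} $\equiv_p$ \ecRET{H,\Pi} $\equiv_p$ \echom{H,\Pi}, since the retraction setting plants the target inside the instance and forces the identity on it, thereby eliminating the collapse possibilities and letting the rigidity be imported directly from the analogous digraph statement \RET{D} rather than reproved from scratch.
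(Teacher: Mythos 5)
Your overall strategy is genuinely different from the paper's: you pass through a balanced digraph $D$ with $\CSP{T}$ polynomially equivalent to $\HOM{D}$ and then encode each arc $u\to v$ by a signed path $u\,w_a\,v$ with sign pattern $(+,-)$, whereas the paper starts from a bipartite graph $H$ with $\CSP{T}$ equivalent to $\RET{H}$ and attaches long signed zig-zag paths $P_i$, $Q_j$ (with runs of lengths $1$ and $3$) to the vertices of $H$, obtaining the target directly. The distinction matters because the step you yourself flag as ``the hard part'' is precisely where your construction has a gap, and the paper's extra machinery exists precisely to close that hole.

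Concretely, $\Phi(D)$ is bipartite with classes (original vertices, subdivision vertices), and an ec-homomorphism $\Phi(D')\ecto\Phi(D)$ with $D'$ connected either preserves or swaps these classes. In the swapping case, for each arc $(u,v)$ of $D'$ the signs force $f(u)=w_a$ and $f(v)=w_{a''}$ with $\mathrm{tail}(a)=\mathrm{head}(a'')$; that is, $D'$ admits a homomorphism to the \emph{reverse of the line digraph} of $D$, call it $D^\dagger$, rather than to $D$. Such swapping maps do exist --- $\Phi(D^\dagger)$ itself ec-maps to $\Phi(D)$ by sending $a\mapsto w_a$ and $w_{(a,a'')}\mapsto\mathrm{tail}(a)$ --- so your reduction is faithful if and only if $D^\dagger\to D$, a property you neither verify nor have any reason to expect of the Feder--Vardi digraph. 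The levelling you invoke does not exclude the swap: it pins down heights only when the instance maps \emph{onto} the full alternating path, and a bare $\Phi(D')$ (say, $D'$ a single arc) maps into that path at many offsets, including parity-shifted ones. Ruling this out requires attaching height-spanning gadgets to every vertex of the instance and rigidity gadgets to the target, which is exactly the content of the paper's $P$, $P_i$, $Q_j$ construction. The retraction detour does not rescue the argument by itself: planting $\Phi(D)$ forces class preservation only on the component containing it, and to pass from \ecRET{\Phi(D)} back to \echom{\Phi(D)} you need $\Phi(D)$ to be an ec-core, which amounts to $D\not\to D^\dagger$ --- the same unaddressed issue in the other direction.
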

\begin{proof}
We follow the proof of Theorem~5.14 in the book~\cite{HNbook} proving a similar statement for digraph homomorphism problems. The structure of the proof in~\cite{HNbook} is as follows. First, one shows that for each CSP template $T$, there is a bipartite graph $H$ such that the \CSP{T} problem and 
the \RET{H} problem are polynomially equivalent. Next, it is shown that for each 
bipartite graph $H$, there is a digraph $H'$ such that \RET{H} and 
\RET{H'} are polynomially equivalent. Finally, it is observed that $H'$ is a core and thus \RET{H'} and \HOM{H'} are polynomially equivalent. We will adapt this proof to the case of ec-homomorphism problems of signed graphs.

The construction of $H'$ from $H$ in~\cite{HNbook} is through the use of so-called zig-zag paths. 
For signed graphs, we construct a similar collection of paths. This will allow us to construct a signed graph $(H',\Pi)$ from a bipartite graph $H$ such that \RET{H} and 
\ecRET{H',\Pi} are polynomially equivalent. Our paths will have positive edges denoted by $B$ and negative edges denoted by $R$. Hence, the path $BR^3B^3R$ consists of one positive edge, three negative edges, three positive edges and a negative edge. The maximal monochromatic (constant signs) subpaths are called \emph{runs}. Thus, the above path is the concatenation of four runs: the first and last of length~$1$, and the middle two of length~$3$.

Given an odd integer $\ell$, we construct a signed path $P$ consisting of $\ell$ runs. The first and the last run each consist of a single negative edge. The interior runs are of length~$3$. We denote that last (rightmost) vertex of $P$ by $0$. From $P$, we construct $\ell-2$ paths $P_1, \dots, P_{\ell-2}$. Path $P_i$ ($i=1, 2, \dots, \ell-2$) is obtained from $P$ by replacing the $i^{th}$ run of length~$3$ with a run of length~$1$. We denote the rightmost vertex of $P_i$ by $i$. 

Similarly, for an even integer $k$, we construct a second family of paths $Q$ and 
$Q_j$, ($j=1, 2, \dots, k-2$). The leftmost vertex of $Q$ is $1$ and 
the leftmost vertex of $Q_j$ is $j$. The paths are described below:

$$
\begin{array}{rclcrcl}
P   & := & R\underbrace{B^3 R^3 \cdots R^3 B^3}_{\ell-2}R & \hspace{0.5cm} &
Q   & := & R\underbrace{B^3 R^3 \cdots B^3 R^3}_{k-2}B \\
P_i & := & R\underbrace{B^3 \cdots R^3}_{i-1}B\underbrace{R^3 \cdots B^3}_{\ell-i-2}R \hspace{1em}\mbox{($i$ odd)} & &
Q_j & := & R\underbrace{B^3 \cdots R^3}_{j-1}B\underbrace{R^3 \cdots R^3}_{k-j-2}B \hspace{1em}\mbox{($j$ odd)} \\
P_i & := & R\underbrace{B^3 \cdots B^3}_{i-1}R\underbrace{B^3 \cdots B^3}_{\ell-i-2}R \hspace{1em}\mbox{($i$ even)} & &
Q_j & := & R\underbrace{B^3 \cdots B^3}_{j-1}R\underbrace{B^3 \cdots R^3}_{k-j-2}B \hspace{1em}\mbox{($j$ even)} \\
\end{array}
$$

We observe the following (see~\cite[page 156]{HNbook}):
\begin{enumerate}
 \item The paths $P$ and $P_i$ ($i=1,2,\dots \ell-2$) each admit an ec-homomorphism \emph{onto} an \emph{alternating path} of length~$\ell$, (that is, a path consisting of $\ell$ runs each of length one: $RBRB\cdots R$).
 \item The paths $Q$ and $Q_j$ ($j=1,2,\dots k-2$) each admit an ec-homomorphism onto an alternating path of length~$k$. 
 \item $P_i \ecto P_{i'}$ implies $i=i'$.
 \item $Q_j \ecto Q_{j'}$ implies $j=j'$.
 \item $P \ecto P_i$ for all $i$.
 \item $Q \ecto Q_j$ for all $j$.
 \item if $X$ is a signed graph and $x$ is a vertex of $X$ such that $f: X \ecto P_i$ and $f': X \ecto P_{i'}$ for $i\neq i'$ with $f(x) = i$ and $f'(x) = i'$, then there is an ec-homomorphism $F:X \ecto P$ with $F(x)=0$. 
 \item if $Y$ is a signed graph and $y$ is a vertex of $Y$ such that $f: Y \ecto Q_j$ and $f': Y \ecto Q_{j'}$ for $j \neq j'$ with $f(y) = j$ and $f'(y) = j'$, then there is an ec-homomorphism $F:Y \ecto Q$ with $F(y)=1$.
\end{enumerate}

We note that alternating paths in signed graphs can be used to define \emph{height} analogously to height in directed acyclic graphs. Specifically, suppose $(G,\Sigma)$ is a connected signed graph that admits an ec-homomorphism onto an alternating path, say $AP$. Let the vertices of $AP$ be $h_0, h_1, \dots, h_{t}$. Observe that each vertex in the path has at most one neighbour joined by a negative edge and at most one neighbour joined by a positive edge. Thus, once a single vertex $u$ in $(G,\Sigma)$ is mapped to $AP$, the image of each neighbour of $u$ is uniquely determined; by connectivity, the image of all vertices is uniquely determined. In particular, as $(G,\Sigma)$ maps \emph{onto} $AP$, there is exactly one ec-homomorphism of $(G,\Sigma)$ to $AP$. (More precisely, if the path has odd length, there is an ec-automorphism that reverses the path. In this case there are two ec-homomorphisms that are equivalent up to reversal.) We then observe that if $g:(G,\Sigma) \ecto AP$ with $g$ being onto, $h:(H,\Pi) \ecto AP$, and $f: (G,\Sigma) \ecto (H,\Pi)$, then for all vertices $u \in V(G)$, $g(u) = h(f(u))$. This allows us to define the height of $u \in V(G)$ to be $h_i$ when $g(u)=h_i$. Specifically, vertices at height $h_i$ in $(G,\Sigma)$ must map to vertices at height $h_i$ in $(H,\Pi)$. 

For each problem $T$ in CSP, there is a bipartite graph $H$ such that \CSP{T} and \RET{H} are equivalent~\cite{FV98,HNbook}. Let $H$ be a bipartite graph with bipartition $(A,B)$, where $A=\{a_1,\dots,a_{|A|}\}$ and $B=\{b_1,\dots,b_{|B|}\}$. Let $\ell$ (respectively $k$) be the smallest odd (respectively even) integer greater than or equal to $|A|$ (respectively $|B|$). 
Recall that $H$ has as its core a single edge, for which the \HOM{H} problem is polynomial. However, we are using the
\RET{H} problem and thus we require the retraction to be the identity on $H$. We will now add some signed gadgets to the vertices of $H$ to enforce that any homomorphism from the copy of $H$ in $G$ to $H$ itself must act as the identity on $H$. To each vertex $a_i \in A$, attach a copy of $P_i$ identifying $i$ in $P_i$ with $a_i$ in $A$. To each vertex $b_j \in B$ attach a copy of $Q_j$, identifying $j$ in $Q_j$ with $b_j$ in $B$. Let all original edges of $H$ be positive. Call the resulting signed graph $(H',\Pi)$. See Figure~\ref{fig:rbtarget} for an illustration.

Let $G$ be an instance of \RET{H}. In particular, we may assume without loss of generality that $H$ is a subgraph of $G$, $G$ is connected, and $G$ is bipartite. Let $(A',B')$ be the bipartition of $G$ where $A \subseteq A'$ and $B \subseteq B'$. To each vertex $v$ of $A' \setminus A$, we attach a copy of $P$, identifying $v$ and $0$. To the vertices of $A \cup B$, we attach paths $P_i$ and $Q_j$ as described above to create a copy of $H'$. We let the original edges of $G$ be positive. Call the resulting signed graph $(G',\Sigma)$. In particular, note that $(G',\Sigma)$ and $(H',\Pi)$ both map onto an alternating path of length $\ell+k+1$. The (original) vertices of $G$ and $H$ are at height $\ell$ and $\ell+1$ for parts $A$ and $B$ respectively. In particular, by the eight above properties, under any ec-homomorphism $f:(G',\Sigma) \ecto (H',\Pi)$ the restriction of $f$ to $G$ must map onto $H$ with vertices in $A'$ mapping to $A$ and vertices in $B'$ mapping to $B$.

Using the eight properties of the paths above and following the proof of Theorem~5.14 in~\cite{HNbook}, we conclude that $G$ is a YES instance of \RET{H} if and only if $(G',\Sigma)$ is a YES instance of \ecRET{H',\Pi}.

On the other hand, let $(G',\Sigma)$ be an instance of \RET{H',\Pi}. We sketch the proof from~\cite{HNbook}. We observe that $(G',\Sigma)$ must map to an alternating path of length $\ell+k+1$. The two levels of $(G',\Sigma)$ corresponding to $H$ induce a bipartite graph (with positive edges) which we call $G$. The components of the subgraph of $(G',\Sigma)$ obtained by removing the edges of $G$ fall into two types: those which map to lower levels and those which map to higher levels than $G$. Let $C_t$ be a component that maps to a lower level. After required identifications, we may assume that $C_t$ contains only one vertex from $G$ (say $v$) and $C_t$ must map to some $P_i$. If $P_i$ is the unique $P_i$ path to which $C_t$ maps, then we modify $G'$ by identifying $v$ and $i$. Otherwise, $C_t$ maps to two paths and hence to all paths. The resulting signed graph has an ec-retraction to $(H',\Pi)$ if and only if $G$ has a retraction to $H$.
\end{proof}

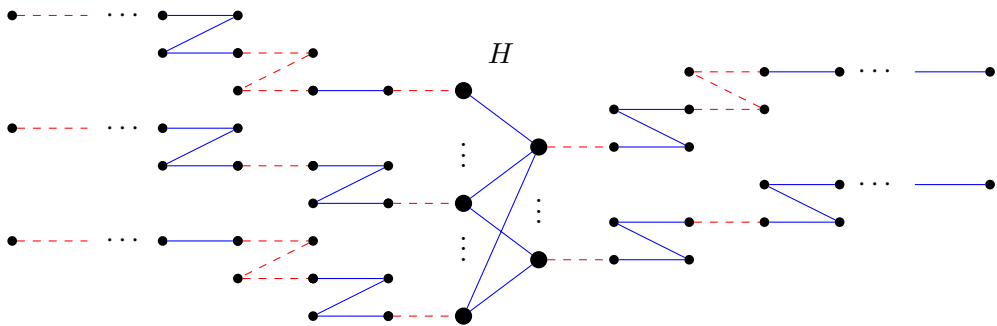
\begin{figure}
\begin{center}
\begin{tikzpicture}
\node[smallblack] (p1) at (0,5) {};
\draw[red,dashed] (p1)--(1,5);
\node at (1.5,5) {$\cdots$};
\node[smallblack] (p2) at (2,5) {};
\node[smallblack] (p3) at (3,5) {};
\node[smallblack] (p4) at (2,4.5) {};
\node[smallblack] (p5) at (3,4.5) {};
\draw[blue] (p2)--(p3)--(p4)--(p5);
\node[smallblack] (p12) at (3,4.5) {};
\node[smallblack] (p13) at (4,4.5) {};
\node[smallblack] (p14) at (3,4) {};
\node[smallblack] (p15) at (4,4) {};
\draw[red,dashed] (p12)--(p13)--(p14)--(p15);
\node[smallblack] (p22) at (4,4) {};
\node[smallblack] (p23) at (5,4) {};
\draw[blue] (p22)--(p23);
\node[smallblack] (p32) at (5,4) {};
\node[blackvertex] (pa1) at (6,4) {};
\draw[red,dashed] (p32)--(pa1);

\begin{scope}[yshift=-1.5cm]
\node[smallblack] (p1) at (0,5) {};
\draw[red,dashed] (p1)--(1,5);
\node at (1.5,5) {$\cdots$};
\node[smallblack] (p2) at (2,5) {};
\node[smallblack] (p3) at (3,5) {};
\node[smallblack] (p4) at (2,4.5) {};
\node[smallblack] (p5) at (3,4.5) {};
\draw[blue] (p2)--(p3)--(p4)--(p5);
\node[smallblack] (p12) at (3,4.5) {};
\node[smallblack] (p13) at (4,4.5) {};
\draw[red,dashed] (p12)--(p13);
\node[smallblack] (p22) at (4,4.5) {};
\node[smallblack] (p23) at (5,4.5) {};
\node[smallblack] (p24) at (4,4) {};
\node[smallblack] (p25) at (5,4) {};
\draw[blue] (p22)--(p23)--(p24)--(p25);
\node[smallblack] (p32) at (5,4) {};
\node[blackvertex] (pa2) at (6,4) {};
\draw[red,dashed] (p32)--(pa2);
\end{scope}

\begin{scope}[yshift=-3cm]
\node[smallblack] (p1) at (0,5) {};
\draw[red,dashed] (p1)--(1,5);
\node at (1.5,5) {$\cdots$};
\node[smallblack] (p2) at (2,5) {};
\node[smallblack] (p3) at (3,5) {};
\draw[blue] (p2)--(p3);
\node[smallblack] (p12) at (3,5) {};
\node[smallblack] (p13) at (4,5) {};
\node[smallblack] (p14) at (3,4.5) {};
\node[smallblack] (p15) at (4,4.5) {};
\draw[red,dashed] (p12)--(p13)--(p14)--(p15);
\node[smallblack] (p22) at (4,4.5) {};
\node[smallblack] (p23) at (5,4.5) {};
\node[smallblack] (p24) at (4,4) {};
\node[smallblack] (p25) at (5,4) {};
\draw[blue] (p22)--(p23)--(p24)--(p25);
\node[smallblack] (p32) at (5,4) {};
\node[blackvertex] (pa3) at (6,4) {};
\draw[red,dashed] (p32)--(pa3);
\end{scope}

\begin{scope}[yshift=-0.75cm]
\node[blackvertex] (qb1) at (7,4) {};
\node[smallblack] (q11) at (8,4) {};
\draw[red,dashed] (qb1)--(q11);
\node[smallblack] (q12) at (9,4) {};
\node[smallblack] (q13) at (8,4.5) {};
\node[smallblack] (q14) at (9,4.5) {};
\draw[blue] (q11)--(q12)--(q13)--(q14);
\node[smallblack] (q22) at (10,4.5) {};
\node[smallblack] (q23) at (9,5) {};
\node[smallblack] (q24) at (10,5) {};
\draw[red,dashed] (q14)--(q22)--(q23)--(q24);
\node[smallblack] (q32) at (11,5) {};
\draw[blue] (q24)--(q32);
\node at (11.5,5) {$\cdots$};
\node[smallblack] (q41) at (13,5) {};
\draw[blue] (12,5)--(q41);
\end{scope}

\begin{scope}[yshift=-2.25cm]
\node[blackvertex] (qb2) at (7,4) {};
\node[smallblack] (q11) at (8,4) {};
\draw[red,dashed] (qb2)--(q11);
\node[smallblack] (q12) at (9,4) {};
\node[smallblack] (q13) at (8,4.5) {};
\node[smallblack] (q14) at (9,4.5) {};
\draw[blue] (q11)--(q12)--(q13)--(q14);
\node[smallblack] (q22) at (10,4.5) {};
\draw[red,dashed] (q14)--(q22);
\node[smallblack] (q32) at (11,4.5) {};
\node[smallblack] (q33) at (10,5) {};
\node[smallblack] (q34) at (11,5) {};
\draw[blue] (q22)--(q32)--(q33)--(q34);
\node at (11.5,5) {$\cdots$};
\node[smallblack] (q41) at (13,5) {};
\draw[blue] (12,5)--(q41);
\end{scope}

\draw[blue] (pa1)--(qb1)--(pa2)--(qb2)--(pa3) (pa3)--(qb1);

\node at (6.5,4.5) {$H$};
\node at (6,3.25) {$\vdots$};
\node at (6,2) {$\vdots$};
\node at (7,2.5) {$\vdots$};

\end{tikzpicture}
\end{center}
\caption{Construction of a signed target $(H',\Pi)$ from a \RET{H} problem.}
\label{fig:rbtarget}
\end{figure}

\section{A dichotomy theorem for s-homomorphism problems}\label{sec:main}

In this section, we prove a dichotomy theorem for s-homomorphism problems \shom{H,\Pi} such that $(H, \Pi)$ belongs to the family $\mathcal{S}^*$ consisting of those signed graphs not containing all three of a negative digon, a positive and a negative loop. This includes a full dichotomy for all simple signed graphs. We remark that, by Proposition~\ref{prop:permequiv}, our theorem gives a dichotomy theorem for ec-homomorphism problems \echom{P(H,\Pi)} where $(H, \Pi)$ (equivalently, $P(H,\Pi)$) belongs to $\mathcal{S}^*$. Recall that by Theorem~\ref{thm:CSP}, a full dichotomy for ec-homomorphism problems for all signed bipartite graphs would imply a dichotomy theorem for all of CSP, and thus settle the Dichotomy Conjecture of Feder and Vardi.

\subsection{The indicator construction}

We recall the \emph{indicator construction} defined in~\cite{HN90}. Given our setting, we use the construction for signed graphs, but it can be generalized to any number of edge-colours and any number of simultaneous indicators (see~\cite{Bthesis}).

Let $(H,\Pi)$ be a signed graph. An \emph{indicator}, $((I,\Lambda),i,j)$, is a signed graph $(I, \Lambda)$ with two distinguished vertices $i$ and $j$ such that $(I, \Lambda)$ admits an ec-automorphism mapping $i$ to $j$ and vice-versa. The \emph{result of the indicator $((I,\Lambda),i,j)$} applied to $(H,\Pi)$ is an undirected graph denoted $(H,\Pi)^*$ and defined as follows.

\begin{itemize}
  \item[(i)] $V((H,\Pi)^*) = V(H)$
  \item[(ii)] There is an edge from $u$ to $v$ in $(H,\Pi)^*$ if there is an ec-homomorphism of $(I,\Lambda) \ecto (H,\Pi)$ such that $i \mapsto u$ and $j\mapsto v$.
\end{itemize}

Note that $(H,\Pi)^*$ is a classical undirected graph.  The notation reflects the fact that it is derived from the signed graph $(H,\Pi)$.  The indicator construction is a fundamental tool in proving NP-completeness results.

\begin{theorem}[Hell and Ne\v{s}et\v{r}il~\cite{HN90}]\label{thm:indicator}
The \HOM{(H,\Pi)^*} problem admits a polynomial-time reduction to \echom{H,\Pi}. In particular, if \HOM{(H,\Pi)^*} is NP-complete, then \echom{H,\Pi} is NP-complete.
\end{theorem}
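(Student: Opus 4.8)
The plan is to use the standard edge-replacement reduction that underlies the indicator method. Given an instance $G$ of $\HOM{(H,\Pi)^*}$, I would build a signed graph $(G',\Sigma')$ by taking a copy of $V(G)$ and, for each edge $uv$ of $G$, gluing in a fresh copy of the indicator $(I,\Lambda)$ with its distinguished vertices $i$ and $j$ identified to $u$ and $v$ respectively (the interior vertices of distinct indicator copies being kept pairwise disjoint). Since $(I,\Lambda)$ is a fixed signed graph of constant size, $(G',\Sigma')$ has size linear in $|E(G)|$ and is computable in polynomial time, so it remains only to prove that $G \to (H,\Pi)^*$ if and only if $(G',\Sigma') \ecto (H,\Pi)$.

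For the forward direction, given a homomorphism $\varphi : G \to (H,\Pi)^*$, each edge $uv$ of $G$ maps to an edge $\varphi(u)\varphi(v)$ of $(H,\Pi)^*$, which by the definition of the result of the indicator means there is an ec-homomorphism $\psi_{uv} : (I,\Lambda) \ecto (H,\Pi)$ with $i \mapsto \varphi(u)$ and $j \mapsto \varphi(v)$. I would then define a map on $(G',\Sigma')$ that agrees with $\varphi$ on the vertices coming from $V(G)$ and with $\psi_{uv}$ on the interior of the corresponding indicator copy; because distinct copies overlap only in the endpoint vertices $u,v$, which are assigned consistently, this patches together into a well-defined ec-homomorphism $(G',\Sigma') \ecto (H,\Pi)$.

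For the converse, given an ec-homomorphism $\Phi : (G',\Sigma') \ecto (H,\Pi)$, I would set $\varphi = \Phi|_{V(G)}$. For each edge $uv$ of $G$, restricting $\Phi$ to the attached indicator copy yields an ec-homomorphism $(I,\Lambda) \ecto (H,\Pi)$ sending $i \mapsto \varphi(u)$ and $j \mapsto \varphi(v)$, so $\varphi(u)\varphi(v)$ is an edge of $(H,\Pi)^*$ by definition; hence $\varphi : G \to (H,\Pi)^*$ is a homomorphism. This establishes the equivalence, and the ``in particular'' clause follows at once, since a polynomial-time reduction transfers NP-completeness.

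The one point requiring care --- and the reason the indicator is required to admit an ec-automorphism swapping $i$ and $j$ --- is that the edges of $G$ are undirected, whereas the indicator distinguishes an \emph{ordered} pair $(i,j)$. This automorphism guarantees that the edge relation of $(H,\Pi)^*$ is symmetric, so that $(H,\Pi)^*$ is genuinely an undirected graph, and that attaching a copy of the indicator to the unordered pair $\{u,v\}$ is unambiguous: any ec-homomorphism realizing $i \mapsto u,\, j \mapsto v$ can be composed with the swapping automorphism to realize $i \mapsto v,\, j \mapsto u$. Beyond this symmetry bookkeeping the argument is routine, the sign constraints being handled automatically since every local map used is already an ec-homomorphism.
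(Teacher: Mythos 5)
Your proof is correct and is exactly the standard edge-replacement argument behind the indicator construction; the paper itself gives no proof of this statement, citing Hell and Ne\v{s}et\v{r}il, and your reduction (with the observation that the $i\leftrightarrow j$ automorphism makes the gadget attachment and the symmetry of $(H,\Pi)^*$ unproblematic) is the argument from that source adapted to signed graphs.
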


We remark that if one removes the requirement that $((I,\Lambda),i, j)$ admits an ec-automorphism interchanging $i$ and $j$, Theorem~\ref{thm:indicator} still holds, but then $(H,\Pi)^*$ has directed edges.

As an example, consider the signed graph $(H,\Pi)$ and the 
switching graph $P(H,\Pi)$ in Figure~\ref{fig1}. Let 
$(I,\Lambda)$ be a path of length~$2$ with endpoints $i$, $j$ and middle
vertex $c$, with $ic$ positive and $cj$ negative. The result of the indicator $((I,\Lambda),i,j)$
on $P(H,\Pi)$ is a directed graph $P(H,\Pi)^*$. However,
one can easily verify that in $P(H,\Pi)$, if there exists $\varphi:(I,\Lambda) \ecto P(H,\Pi)$ with $\varphi(i) = u$ and $\varphi(j) = v$, then there exists 
$\varphi':(I, \Lambda) \ecto P(H,\Pi)$ such that $\varphi'(i) = v$ and $\varphi'(j)=u$. 
(Specifically, if $\varphi(c)=w_t$, then use $\varphi'(c)=w_{1-t}$.) 
Thus, $P(H,\Pi)^*$ is a symmetric digraph which we may view as an undirected
graph. Alternatively, take two copies of $(I,\Lambda)$ and identify
$i$ in the first with $j$ in the second, and vice versa.
The result is an indicator $((I',\Lambda'),i,j)$ consisting of a $4$-cycle whose
edges alternate positive and negative, where $i$ and $j$ are antipodal vertices.
Now, the result of the indicator construction (with respect to $((I',\Lambda'),i,j)$)
is the undirected graph $P(H,\Pi)^*$. See Figure~\ref{fig1}.

\begin{figure}[!htpb]
\begin{center}
\begin{tikzpicture}[every loop/.style={},scale=0.7]
  \node[blackvertex] (u) at (0,0) {};
  \node[blackvertex] (v) at (0,1.5) {};
  \node[blackvertex] (w) at (0,3) {};
  \node[blackvertex] (x) at (0,4.5) {};
  \draw[thick,blue] (u)--(v)--(w)--(x);
  \draw[thick,red,dashed] (u) to[bend left=20] (x);
  \path[thick,red,dashed] (x)   edge[out=150,in=210,loop, min distance=10mm] node  {} (x);
  
  \node at (0,-1) {$(H,\Pi)$};

\begin{scope}[xshift=3.5cm]
  \node[blackvertex] (u0) at (0,0) {};
  \node[blackvertex] (v0) at (0,1.5) {};
  \node[blackvertex] (w0) at (0,3) {};
  \node[blackvertex] (x0) at (0,4.5) {};
  \draw[thick,blue] (u0)--(v0)--(w0)--(x0);
  \draw[thick,red,dashed] (u0) to[bend left=20] (x0);

  \node[blackvertex] (u1) at (2,0) {};
  \node[blackvertex] (v1) at (2,1.5) {};
  \node[blackvertex] (w1) at (2,3) {};
  \node[blackvertex] (x1) at (2,4.5) {};
  \draw[thick,blue] (u1)--(v1)--(w1)--(x1);
  \draw[thick,red,dashed] (u1) to[bend right=20] (x1);

  \path[thick,red,dashed] (x0)   edge[out=150,in=210,loop, min distance=10mm] node  {} (x0);
  \path[thick,red,dashed] (x1)   edge[out=30,in=-30,loop, min distance=10mm] node  {} (x1);
  \draw[thick,blue] (x0)--(x1);
  
  \draw[thick,red,dashed] (u0)--(v1)--(w0)--(x1) (u1)--(v0)--(w1)--(x0);
  \draw[thick,blue] (u0)--(x1) (u1)--(x0);
  
  \node at (1,-1) {$P(H,\Pi)$};
\end{scope}

\begin{scope}[xshift=8cm]
  \node[blackvertex] (u0) at (0,0) {};
  \node[blackvertex] (v0) at (0,1.5) {};
  \node[blackvertex] (w0) at (0,3) {};
  \node[blackvertex] (x0) at (0,4.5) {};

  \node[blackvertex] (u1) at (2,0) {};
  \node[blackvertex] (v1) at (2,1.5) {};
  \node[blackvertex] (w1) at (2,3) {};
  \node[blackvertex] (x1) at (2,4.5) {};

  \draw[thick,black] (x0)--(x1) (x0)--(v1) (v0)--(x1) (x0)--(w0) (x1)--(w1);
  \draw[thick,black] (w0)--(w1) (u0)--(w1) (w0)--(u1);
  \draw[thick,black] (x0)--(u1) (u0)--(x1);
  \draw[thick,black] (v0)--(v1) (v0) to[bend left=20] (x0) (v1) to[bend right=20] (x1);
  \draw[thick,black] (u0)--(u1) (u0) to[bend left=20] (w0) (u1) to[bend right=20] (w1);

  \draw[ultra thick,black] (x0)--(v1)--(v0) to[bend left=20] (x0);
  \node at (1,-1) {$P(H,\Pi)^*$};

\end{scope}

\end{tikzpicture}
\end{center}
\caption{A signed graph $(G,\Sigma)$, the switching graph $P(G,\Sigma)$, and the result of the indicator construction $P(G,\Sigma)^*$.}
\label{fig1}
\end{figure}
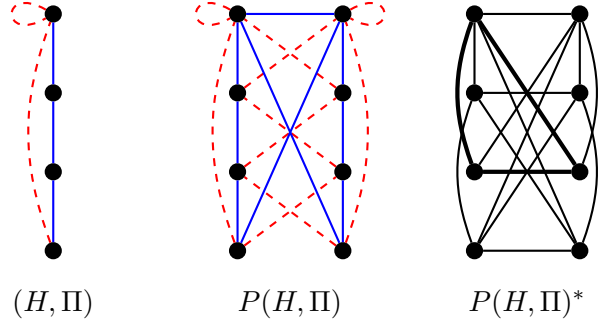

In the proofs below, we will use the indicator $((I,\Lambda),i,j)$ defined above for ease of explanation with the understanding that the resulting graph $P(H,\Pi)^*$ is an undirected graph (for either of the reasons above).

\subsection{The dichotomy theorem}

We now prove our main theorem.

\begin{theorem}\label{thm:dicho}
Let $(H,\Pi)$ be a connected signed graph that does not contain all three of a negative digon, a positive loop, and a negative loop. Then, \shom{H,\Pi} is polynomial-time solvable if the s-core of $(H,\Pi)$ has at most two edges; it is NP-complete otherwise.
\end{theorem}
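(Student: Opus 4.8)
The plan is to run both directions through the switching-graph reduction of Proposition~\ref{prop:permequiv}, which replaces the question $(G,\Sigma)\sto(H,\Pi)$ by the ec-homomorphism question $(G,\Sigma)\ecto P(H,\Pi)$. Since \shom{H,\Pi} depends only on the s-core of $(H,\Pi)$, I would first assume that $(H,\Pi)$ is its own connected s-core. I would also record a \emph{sign-reversal symmetry}: negating every sign of the input and of the target turns \shom{H,\Pi} into the polynomially equivalent problem \shom{H,\overline\Pi}, which swaps positive and negative loops while fixing negative digons. This lets me treat ``no positive loop'' and ``no negative loop'' as one case.

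For the polynomial half I would enumerate, up to switching and ec-isomorphism, all s-cores with at most two edges: the edgeless graph, a single positive or negative edge, a single positive or negative loop, a negative digon, and a single vertex carrying both a positive and a negative loop. A path of length two always switches to all-positive and folds to a single edge, and a loop with one pendant edge folds to a single loop after switching the pendant vertex, so neither is an s-core; conversely, any connected signed graph containing a doubly-looped vertex has that vertex as its entire s-core. For each target the algorithm is immediate: via Proposition~\ref{prop:equiv}, mapping to a single positive (negative) edge amounts to testing $\Sigma\equiv\emptyset$ (respectively $\Sigma\equiv E(G)$) together with bipartiteness of $G$; mapping to a single positive (negative) loop amounts to testing $\Sigma\equiv\emptyset$ (respectively $\Sigma\equiv E(G)$) alone; mapping to a negative digon is exactly testing bipartiteness of $G$; and mapping to a doubly-looped vertex always succeeds. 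This establishes the easy half for \emph{all} signed graphs, as asserted after Conjecture~\ref{conj}.

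The NP-complete half is the core of the proof, and I would drive it through the indicator construction (Theorem~\ref{thm:indicator}) applied to $P(H,\Pi)$, choosing the indicator by which forbidden structure is absent. When $(H,\Pi)$ has no positive loop I use the \emph{positive $2$-path} (both edges positive) as indicator, so that adjacency in $P(H,\Pi)^*$ means ``joined by a positive $2$-path''; this indicator is already symmetric in its two ends, so $P(H,\Pi)^*$ is undirected. The ``no negative loop'' case is reduced to this one by sign-reversal, and the ``no negative digon'' case is handled by the mixed $2$-path indicator of the excerpt. The purpose of these choices is to kill all loops in $P(H,\Pi)^*$: a loop would require a positive digon, a negative digon, or a doubly-looped vertex in $P(H,\Pi)$ (equivalently in $(H,\Pi)$); positive digons never occur, the relevant digon is excluded by the chosen case, and no doubly-looped vertex survives once the s-core has at least three edges. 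Hence $P(H,\Pi)^*$ is always \emph{loopless}, so by Theorem~\ref{thm:HN} it suffices to prove it is \emph{non-bipartite}, whereupon Theorem~\ref{thm:indicator} transfers NP-completeness of \HOM{P(H,\Pi)^*} to \echom{P(H,\Pi)}, and thus to \shom{H,\Pi}.

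The remaining and genuinely hard step is to exhibit an odd cycle in $P(H,\Pi)^*$ whenever the s-core has at least three edges. I would organise this structurally. Balanced targets reduce directly to \HOM{H} via Observation~\ref{obs:eq-emptyset-E(H)}, where a three-edge core forces $H$ non-bipartite and loopless and Theorem~\ref{thm:HN} applies at once. For unbalanced targets I would convert a shortest negative cycle, augmented by the extra edges guaranteed by the three-edge lower bound, into a closed alternating walk in $P(H,\Pi)$ that descends to an odd closed walk, hence an odd cycle, in $P(H,\Pi)^*$. I expect this final construction to be the main obstacle: one must verify \emph{uniformly} — over loopless signed graphs, over negative-digon configurations such as the colouring targets $(Z_k,\Upsilon_k)$ and $(Z_k^*,\Upsilon_k^*)$, and over mixed configurations — that an s-core rich enough to carry three edges always yields an odd alternating closed walk, while using the rigidity of being an s-core to rule out the degenerate foldings that would otherwise collapse such a walk to even length.
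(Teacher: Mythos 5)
Your polynomial half matches the paper's and is fine (modulo the harmless redundancy that a single negative edge is switching-equivalent to a single positive one), and your overall strategy --- reduce \shom{H,\Pi} to \echom{P(H,\Pi)} via Proposition~\ref{prop:permequiv} and then apply the indicator construction and Theorem~\ref{thm:HN} --- is indeed the paper's strategy. But the hardness half contains a concrete error. With the all-positive $2$-path $i\,c\,j$ as indicator, a loop appears at $u$ in $P(H,\Pi)^*$ as soon as $u$ has \emph{any} positive edge to some $w$, because an ec-homomorphism may send both edges $ic$ and $cj$ to that single edge between $u$ and $w$; no positive digon is needed (Definition~\ref{defn:echom} only asks that an edge of the right sign exist between the images). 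So for that indicator $P(H,\Pi)^*$ is riddled with loops, \HOM{P(H,\Pi)^*} is trivially polynomial, and the reduction yields nothing. Your loop-killing argument is valid only for the mixed (one positive, one negative) $2$-path, where a loop at $u$ genuinely forces a negative digon or a doubly-looped vertex. For the ``no positive loop'' situations the paper does something different and simpler: it restricts the \emph{instances} of \echom{P(H,\Pi)} to all-positive signed graphs, i.e.\ it reduces from the ordinary homomorphism problem of the positive-edge subgraph of $P(H,\Pi)$, after exhibiting an all-positive odd cycle in that subgraph (Cases~(a) and~(c) of the paper's proof).

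Beyond that, the step you yourself flag as the main obstacle --- producing an odd cycle in $P(H,\Pi)^*$ (or an all-positive odd cycle in $P(H,\Pi)$) uniformly whenever the s-core has at least three edges --- is exactly where the paper's work lies, and your sketch does not supply it. The paper splits into: a negative digon present (an odd cycle of $H$ itself if $H$ is loop-free, otherwise a walk out to a loop vertex and back through the two levels of $P(H,\Pi)$); a negative even cycle $C_{2k}^-$ present (the explicit zig-zag $v_{1,0},v_{3,1},v_{3,0},\dots,v_{2k-1,0},v_{1,0}$ in $P(H,\Pi)^*$); an odd cycle with no loop of its sign; and a residual case where one must first argue that both a positive and a negative loop exist and then build an odd closed walk in $P(H,\Pi)^*$ along a path joining them, with a parity split on the path length. ``Take a shortest negative cycle and augment it by the extra edges'' does not by itself produce these walks, nor does it address the case distinctions that determine which construction applies. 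As written, the hardness direction is not established.
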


\begin{proof} Suppose that $(H,\Pi)$ is a connected signed graph that does not contain all three of a negative digon, a positive loop, and a negative loop. 

\subsubsection*{Polynomial cases.} 

Suppose that the s-core of $(H,\Pi)$ has at most two edges and 
at least one is a loop. It is straightforward to check that the s-core consists of either: 
a single vertex with a loop of each sign, or a single vertex with a single loop. 

If the s-core of $(H,\Pi)$ is a single vertex with both kinds of loops, 
then every signed graph trivially maps to this graph. Thus, for the remainder of
the proof, particularly the NP-complete cases below, $(H, \Pi)$ will not have a vertex with both kinds of loops.

If the s-core of $(H,\Pi)$ is a single vertex with a single loop, $(G,\Sigma) \sto (H,\Pi)$ if and only if $\Sigma\equiv E(G)$ in the case where the loop is negative or $\Sigma\equiv\emptyset$ in the case where the loop is positive (that is, $(G,\Sigma)$ can be switched so that all edges have the same sign as the loop in the s-core of $(H,\Pi)$); this can be checked in polynomial time by Proposition~\ref{prop:equiv}.

Thus, assume that the s-core of $(H,\Pi)$ is loop-free. First, suppose that there is a single edge joining two vertices. (Note that a single positive edge is switching equivalent to a single negative edge.) This is the case if and only if 
$H$ is bipartite and all cycles of $(H,\Pi)$ are positive, that is, $\Pi\equiv\emptyset$. Then, $(G, \Sigma) \sto (H,\Pi)$ if and only if $G$ is loop-free and bipartite, and all cycles of $(G,\Sigma)$ are positive. All conditions are easy to check (using Proposition~\ref{prop:equiv} for the third one).

If the s-core of $(H,\Pi)$ consists of a single negative digon, $(G,\Sigma) \sto (H,\Pi)$ if and only if $G$ is loop-free and bipartite. This condition can again be checked in polynomial time.

There is no other signed graph with at most two edges that is an s-core.

\medskip

\subsubsection*{NP-complete cases.}

Assume that the s-core of $(H,\Pi)$ has at least three edges. We consider several cases.

\medskip
\noindent\textit{Case~(a): $(H,\Pi)$ contains a negative digon.}

\smallskip
If $H$ is loop-free, then $H$ must contain an odd cycle, as otherwise the negative digon is the s-core. Then, $P(H,\Pi)$ contains an odd cycle of the same length and sign, positive or negative. Restricting the input of \echom{P(H,\Pi)} to graphs with only positive (respectively negative) edges shows that \echom{P(H,\Pi)} is NP-complete by the Hell-Ne\v{s}et\v{r}il dichotomy of Theorem~\ref{thm:HN}. Thus, by Proposition~\ref{prop:permequiv}, \shom{H,\Pi} is NP-complete.

On the other hand, suppose that $(H,\Pi)$ contains a loop. By the assumptions about $(H,\Pi)$ in the statement of the theorem, all loops have the same sign. Let us assume that they are all negative (the proof is symmetric if they are all positive). Let $u,v$ be the vertices of a negative digon of $(H,\Pi)$, and let $w$ be a vertex with a loop. Since $H$ is connected, there is a path $P$ from $v$ to $w$. Possibly after interchanging the roles of $u$ and $v$, we may assume without loss of generality that $P = v, u, x_1, x_2, \dots, w$ where $u = w$ is allowed.
Moreover, by switching, we may assume that each edge of $P$ is positive. Let the resulting signature be $\Pi'$. Then, in $P(H,\Pi')$, the cycle 
$v_0, u_0, x_{1,0}, x_{2,0}, \dots, w_{0}, w_{1}, \dots, x_{2,1}, x_{1,1}, u_1, v_0$ is an odd cycle consisting of only positive edges. By assumption, $(H,\Pi)$ has no positive loops. Thus, \echom{P(H\Pi')} restricted to instances with only positive edges is NP-complete by the Hell-Ne\v{s}et\v{r}il dichotomy of Theorem~\ref{thm:HN} and \shom{H,\Pi} is NP-complete by Proposition~\ref{prop:permequiv}. This settles Case~(a).

\medskip

Thus, for the remainder of the proof, assume that $(H,\Pi)$ has no negative digon.

\medskip
\noindent\textit{Case~(b): $(H,\Pi)$ contains a negative even cycle $C_{2k}^-$ ($k\geq 2$).}

\smallskip
Without loss of generality, we may assume (after appropriate switching) that in
$(H,\Pi)$, the cycle $C_{2k}^-$ contains $2k-1$ positive edges, say $v_1 v_2, v_2 v_3,
\dots, v_{2k-1} v_{2k}$, and a single negative edge $v_{2k} v_{1}$.
Construct the graph $P(H,\Pi)$. Let the paired vertices be $v_{i,0}$
and $v_{i,1}$ for $i\in\{1,\dots,2k\}$. Thus $v_{1,0}, v_{2,0}, \dots, v_{2k,0}, v_{1,0}$ and 
$v_{1,1}, v_{2,1}, \dots, v_{2k,1}, v_{1,1}$ are two copies of $C_{2k}^-$.

Let $(I,\Lambda)$ be a path of length~$2$ on vertices $\{ i, c, j \}$ with a positive edge $ic$ and a negative edge $cj$. Let $P(H,\Pi)^*$ be the result of the indicator construction on $P(H,\Pi)$ with respect to $((I,\Lambda),i,j)$. From the comments above, we recall that $P(H,\Pi)^*$ is an undirected graph.
We will show that $P(H,\Pi)^*$ is loop-free and contains an odd cycle. 
Theorem~\ref{thm:HN} then implies that \HOM{P(H,\Pi)^*} is NP-complete, 
from which it follows that \echom{P(H,\Pi)} is NP-complete by Theorem~\ref{thm:indicator}, and thus \shom{H,\Pi} is NP-complete by Proposition~\ref{prop:permequiv}.

To see that $P(H,\Pi)^*$ is loop-free, observe that an ec-homomorphic image $f(I,\Lambda)$ of $(I,\Lambda)$ where $f(i)=f(j)$ must be a negative digon or a pair of loops (one positive, one negative at a single vertex). Since $P(H,\Pi)$ does not contain either structure, $P(H,\Pi)^*$ is loop-free. Next, we observe that there is a path $v_{1,0}, v_{2,0}, v_{3,1}$ that is a copy of $(I,\Lambda)$ in $P(H,\Pi)$. 
Hence, there is an edge from $v_{1,0}$ to $v_{3,1}$ in $P(H,\Pi)^*$. There is also a copy of $(I,\Lambda)$ on $v_{3,1}, v_{2,1}, v_{3,0}$, giving a path in $P(H,\Pi)^*$: $v_{1,0}, v_{3,1}, v_{3,0}$. Continuing, we obtain a path of even length 
$v_{1,0}, v_{3,1}, v_{3,0}, v_{5,1}, v_{5,0}, \dots, v_{2k-1,1}, v_{2k-1,0}$.
Finally, the path $v_{2k-1,0}, v_{2k,0}, v_{1,0}$ is a copy of $(I,\Lambda)$ in $P(H,\Pi)$. Hence,
there is an edge from $v_{2k-1,0}$ to $v_{1,0}$ in $P(H,\Pi)^*$. This implies the existence of the required odd cycle in $P(H,\Pi)^*$, completing Case~(b). (See for example Figure~\ref{fig1}.)

\medskip

We now assume that $(H,\Pi)$ has no negative digon and all even cycles are positive. In particular, $\Pi\equiv\emptyset$ if and only if all \emph{odd} cycles (including loops) are positive. Similarly, $\Pi\equiv E(H)$ if and only if all odd cycles are negative.

\medskip 

\noindent\textit{Case~(c): $(H,\Pi)$ has an odd cycle $C$ of length at least~$3$, but no loop with the same sign as $C$.}

\smallskip
Let us assume that $C$ is positive and there is no positive loop. (The case where $C$ is negative and there is no negative loop can be handled symmetrically.) In $P(H,\Pi)$, there is a copy of $C$ (with positive edges), but no positive loop. Hence, as in Case~(a), \echom{P(H,\Pi)} is NP-complete and thus, by Proposition~\ref{prop:permequiv}, \shom{H,\Pi} is NP-complete, settling Case~(c).

\medskip

\noindent\textit{Case~(d): For each odd cycle $C$ in $(H,\Pi)$, there is a loop of the same sign as $C$.}

\smallskip
If $(H,\Pi)$ is loop-free, then we conclude that it has no odd cycles, and by Case~(b) all even cycles are positive. Thus, $(H,\Pi)$ has an s-retraction to a single positive edge, contrary to our assumption that the s-core has at least three edges. Hence, $(H,\Pi)$ must contain a vertex $b$ with a positive loop. (The case where it is negative is handled in the same way.)
We claim that $(H,\Pi)$ also contains a vertex $r$ with a negative loop.
Since $(H,\Pi)$ does not retract to a positive loop, it must contain 
a negative loop or a negative cycle of length at least~$3$.
Case~(b) implies that such a cycle is odd, and Case~(c) implies that $(H,\Pi)$ must contain a negative loop. This establishes the claim.

By the connectivity of $H$, we can consider a path $P$ from $b$ to $r$; for simplicity, we switch $(H,\Pi)$ 
in such a way that all the edges on $P$ are positive, 
and let $\Pi'$ be the obtained signature. Let $P = (r=v_0), v_1, v_2, \dots, v_{k-1}, (v_k=b)$.
Construct $P(H,\Pi')^*$ using the same indicator as in Case~(b). Here again, $P(H,\Pi')^*$ has no loop since there is no negative digon. 
Then, $P(H,\Pi')^*$ contains an odd cycle that goes from
$r_0 = v_{0,0}$ to $b_0 =v_{k,0}$ on vertices with even indices, moves to
$b_1 = v_{k,1}$, and returns to $r_0$ on vertices with odd indices, as follows.
$$
  \begin{array}{rl}
    \mbox{$k$ even} & v_{0,0}, v_{2,1}, v_{2,0}, v_{4,1}, v_{4,0}, \dots, v_{k-4,0}, v_{k-2,1}, v_{k,0}, \\
     & \hspace{1em} v_{k,1}, v_{k-1,0}, v_{k-1,1}, \dots, v_{3,0}, v_{3,1}, v_{1,0}, v_{0,0} \\
    \mbox{$k$ odd} & v_{0,0}, v_{2,1}, v_{2,0}, v_{4,1}, v_{4,0}, \dots, v_{k-3,0}, v_{k-1,1}, v_{k,0}, \\
     & \hspace{1em} v_{k,1}, v_{k-2,0}, v_{k-2,1}, \dots, v_{3,0}, v_{3,1}, v_{1,0}, v_{0,0} \\
  \end{array}
$$
Thus, \HOM{P(H,\Pi)^*} is NP-complete. Applying successively Theorem~\ref{thm:indicator} and Proposition~\ref{prop:permequiv}, we deduce that \echom{P(H,\Pi)} and \shom{H,\Pi} are NP-complete as well. This completes Case~(d) and the whole proof.
\end{proof}

Using Proposition~\ref{prop:permequiv}, we can state our dichotomy of Theorem~\ref{thm:dicho} in the setting of switching graphs. The (ec-cores of the) switching graphs corresponding to signed graphs on at most two edges are shown in Figure~\ref{fig:polyperm}.

\begin{cor}\label{cor:dich}
Let $(H,\Pi)$ be a signed graph not containing all three of a negative digon, a positive loop, and a negative loop. If the s-core of $(H,\Pi)$ has at most two edges, then \echom{P(H,\Pi)} is polynomial-time solvable; otherwise, it is NP-complete.
\end{cor}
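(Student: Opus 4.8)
The plan is to recognise Corollary~\ref{cor:dich} as a direct restatement of Theorem~\ref{thm:dicho} under the dictionary supplied by Proposition~\ref{prop:permequiv}, rather than as an independent result. The crucial input is the equivalence (a)~$\Leftrightarrow$~(b) of that proposition: for \emph{every} signed graph $(G,\Sigma)$ one has $(G,\Sigma) \sto (H,\Pi)$ if and only if $(G,\Sigma) \ecto P(H,\Pi)$. Since both \shom{H,\Pi} and \echom{P(H,\Pi)} take an arbitrary signed graph $(G,\Sigma)$ as their instance, this equivalence asserts that the two problems have exactly the same YES-instances. Hence they are literally the same decision problem (the fixed target $P(H,\Pi)$ being determined by, and computable from, the fixed $(H,\Pi)$), and in particular are polynomially equivalent.

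Given this identification, I would simply transport the dichotomy of Theorem~\ref{thm:dicho} across it. For a connected $(H,\Pi)$ obeying the hypothesis---no simultaneous negative digon, positive loop, and negative loop---Theorem~\ref{thm:dicho} states that \shom{H,\Pi} is polynomial-time solvable when the s-core of $(H,\Pi)$ has at most two edges and NP-complete otherwise; reading this through the equality of problems gives precisely the corresponding statement for \echom{P(H,\Pi)}.

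The only genuine gap is that Theorem~\ref{thm:dicho} assumes $(H,\Pi)$ connected while Corollary~\ref{cor:dich} does not. I would close it with the standard component-wise reduction: an s-homomorphism from a connected signed graph lands in a single connected component of the target, so for connected input $(G,\Sigma)$ we have $(G,\Sigma) \sto (H,\Pi)$ if and only if $(G,\Sigma) \sto (H_t,\Pi_t)$ for some component $(H_t,\Pi_t)$ of $(H,\Pi)$. Thus \shom{H,\Pi} reduces to the disjunction of the finitely many problems \shom{H_t,\Pi_t}, which is polynomial exactly when each is polynomial and NP-complete as soon as one is; and since the global hypothesis forbids all three forbidden structures anywhere in $(H,\Pi)$, it automatically holds for each component, so Theorem~\ref{thm:dicho} applies to each. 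The main (and only) obstacle is then the bookkeeping check that ``the s-core has at most two edges'' remains the correct boundary after this split---that a global s-core of at most two edges corresponds to every inclusion-maximal component lying on the polynomial side, while three or more edges in the global s-core force three or more in some component's s-core, triggering NP-completeness there. Once this routine compatibility is verified, the corollary follows.
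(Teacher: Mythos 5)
Your central identification is exactly the paper's own proof: Corollary~\ref{cor:dich} is nothing more than Theorem~\ref{thm:dicho} read through the equivalence (a)\,$\Leftrightarrow$\,(b) of Proposition~\ref{prop:permequiv}, which makes \shom{H,\Pi} and \echom{P(H,\Pi)} the same decision problem on the same instances. That part is correct and is all the paper does.

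The additional step you introduce to dispense with the connectivity hypothesis, however, contains a genuine gap, and the ``bookkeeping check'' you defer as routine is in fact false. Take $(H,\Pi)$ to be the disjoint union of a vertex carrying a positive loop and a negative digon. This satisfies the hypothesis of the corollary (there is no negative loop), and neither component admits an s-homomorphism into the other: a negative digon is a negative closed walk and so cannot map to the positive loop, and a loop cannot map into a loop-free component. Hence the s-core is all of $(H,\Pi)$ and has three edges, while the s-cores of the two components have one and two edges respectively --- so a global s-core with at least three edges does \emph{not} force some component's s-core to have at least three edges. Worse, \shom{H,\Pi} is polynomial for this target: a connected instance maps to it if and only if it is balanced (then it maps to the positive loop) or loop-free and bipartite (then it maps to the digon), both testable in polynomial time using Proposition~\ref{prop:equiv}. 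This also exposes the second flaw in your reduction: the disjunction of finitely many problems is \emph{not} ``NP-complete as soon as one is,'' since hard instances of one component's problem may be absorbed by an easier component, and no reduction from \shom{H_t,\Pi_t} to \shom{H,\Pi} is available in general. The correct reading is that Corollary~\ref{cor:dich} silently inherits the connectivity assumption of Theorem~\ref{thm:dicho}; genuinely dropping it, as your argument attempts, yields a false statement.
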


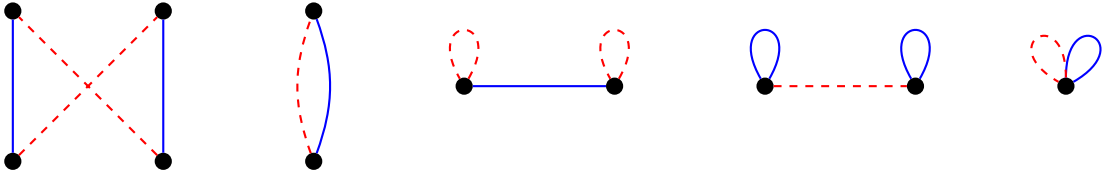
\begin{figure}
\begin{center}
\begin{tikzpicture}
   \node[blackvertex] (u0) at (0,0) {};
   \node[blackvertex] (v0) at (0,2) {};
   \node[blackvertex] (u1) at (2,0) {};
   \node[blackvertex] (v1) at (2,2) {};
   \draw[thick,blue] (u0) -- (v0) (u1) -- (v1);
   \draw[thick,red,dashed] (u0) -- (v1) (u1) -- (v0); 
   
   \begin{scope}[xshift=4cm]
   \node[blackvertex] (u) at (0,0) {};
   \node[blackvertex] (v) at (0,2) {};
    \draw[thick,blue] (u) to[bend right=20] (v);
   \draw[thick,red,dashed] (u) to[bend left=20] (v);   
   \end{scope}
   
   \begin{scope}[xshift=6cm,yshift=1cm]
   \node[blackvertex] (u0) at (0,0) {};
    \node[blackvertex] (u1) at (2,0) {};
    \path[thick,red,dashed] (u0)   edge[out=60,in=120,loop, min distance=10mm] node  {} (u0);
   \path[thick,red,dashed] (u1)   edge[out=60,in=120,loop, min distance=10mm] node  {} (u1);
   \draw[thick,blue] (u0)--(u1);
    \end{scope}

   \begin{scope}[xshift=10cm,yshift=1cm]
   \node[blackvertex] (u0) at (0,0) {};
    \node[blackvertex] (u1) at (2,0) {};
    \path[thick,blue] (u0)   edge[out=60,in=120,loop, min distance=10mm] node  {} (u0);
   \path[thick,blue] (u1)   edge[out=60,in=120,loop, min distance=10mm] node  {} (u1);
   \draw[thick,red,dashed] (u0)--(u1);
    \end{scope}

   \begin{scope}[xshift=14cm,yshift=1cm]
   \node[blackvertex] (u0) at (0,0) {};
    \path[thick,blue] (u0)   edge[out=30,in=90,loop, min distance=10mm] node  {} (u0);
   \path[thick,red,dashed] (u0)   edge[out=90,in=150,loop, min distance=10mm] node  {} (u0);
    \end{scope}

\end{tikzpicture}
\end{center}
\caption{The (ec-cores of) switching graphs constructed from signed graphs on at most two edges.}\label{fig:polyperm}
\end{figure}

\section{Concluding remarks}\label{sec:missing}

We conclude the paper with several remarks and questions.

\subsection{The smallest case not covered by Theorem~\ref{thm:dicho}} The cases that are not covered by our dichotomy theorem all contain a negative digon and both kinds of loops. The smallest signed graph that belongs to this family (and is an s-core) is the signed graph $(D,\Pi)$ on two vertices built from a negative digon by adding to its two vertices a positive loop and a negative loop, respectively (see Figure~\ref{fig:D2}).

\begin{figure}[!htpb]
\centering
  \begin{tikzpicture}[every loop/.style={}]
  \node[blackvertex] (u) at (0,0) {};
  \node[blackvertex] (v) at (0,2) {};
  \path[thick,red,dashed] (u)   edge[out=240,in=300,loop, min distance=10mm] node  {} (u);
  \path[thick,blue] (v)   edge[out=60,in=120,loop, min distance=10mm] node  {} (v);
  \draw[thick,blue] (u) to[bend right=20] (v);
  \draw[thick,red,dashed] (u) to[bend left=20] (v);
  \node at (0,-1.5) {$(D,\Pi)$};
  
  \begin{scope}[xshift=3cm]
   \node[blackvertex] (u0) at (0,0) {};
   \node[blackvertex] (v0) at (0,2) {};
   \node[blackvertex] (u1) at (2,0) {};
   \node[blackvertex] (v1) at (2,2) {};
   \draw[thick,blue] (u0) to[bend right=20] (v0) (u1) to[bend right=20] (v1)
   (u0) to[bend right=10] (v1) (u1) to[bend right=10] (v0);
   \draw[thick,red,dashed] (u0) to[bend left=10] (v1) (u1) to[bend left=10] (v0)
   (u0) to[bend left=20] (v0) (u1) to[bend left=20] (v1);  
   \path[thick,red,dashed] (u0)   edge[out=240,in=300,loop, min distance=10mm] node  {} (u0);
   \path[thick,red,dashed] (u1)   edge[out=240,in=300,loop, min distance=10mm] node  {} (u1);
   \path[thick,blue] (v0)   edge[out=60,in=120,loop, min distance=10mm] node  {} (v0);
   \path[thick,blue] (v1)   edge[out=60,in=120,loop, min distance=10mm] node  {} (v1);  
   \draw[thick,blue] (u0)--(u1);
   \draw[thick,red,dashed] (v0)--(v1);
   \node at (1,-1.5) {$P(D,\Pi)$};
  \end{scope}
  
  \begin{scope}[xshift=8cm]
    \node[blackvertex] (c) at (2.25,2) {};
  \node[blackvertex] (i) at (0,0) {};
  \node[blackvertex] (x) at (1.5,0) {};
  \node[blackvertex] (y) at (3,0) {};
  \node[blackvertex] (j) at (4.5,0) {};
  \draw[thick,blue] (x) to[bend right=30] (y) (i) to[bend right=30] (j);
  \path[thick,blue] (c)   edge[out=60,in=120,loop, min distance=10mm] node  {} (c);
  \draw[thick,red,dashed] (i)--(c)--(x)--(i) (j)--(c)--(y)--(j)
  (x) to[bend left=30] (y);
  \node[left] at (i.north) {$i$};
  \node[right] at (j.north) {$j$};
  \node at (2.25,-1.5) {$((I,\Lambda),i,j)$};
  \end{scope}
\end{tikzpicture}
\caption{The signed graph $(D,\Pi)$, the corresponding switching graph $P(D,\Pi)$,
and the indicator $((I,\Lambda),i,j)$.}\label{fig:D2}
\end{figure}
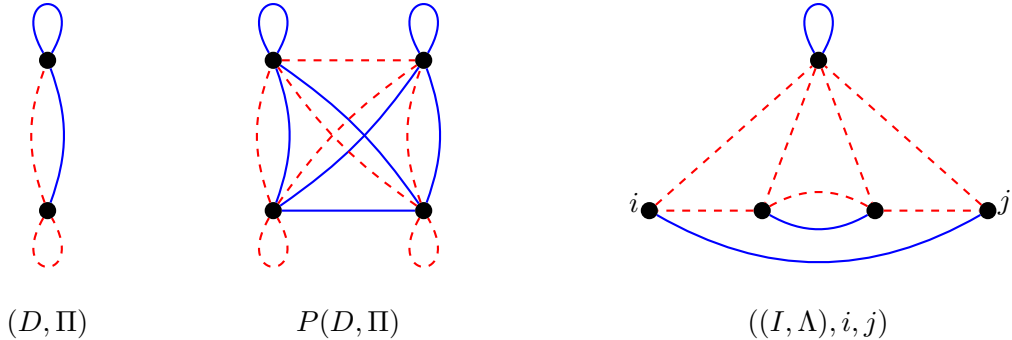

\begin{prop}\label{prop:D2}
\shom{D,\Pi} is NP-complete.
\end{prop}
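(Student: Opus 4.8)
The plan is to follow the same template as the NP-complete cases of Theorem~\ref{thm:dicho}: reduce to an edge-coloured homomorphism problem via the switching graph, apply the indicator construction, and invoke the Hell--Ne\v{s}et\v{r}il dichotomy. By Proposition~\ref{prop:permequiv} it suffices to prove that \echom{P(D,\Pi)} is NP-complete, where $P(D,\Pi)$ is the switching graph depicted in Figure~\ref{fig:D2}. Concretely, $P(D,\Pi)$ has vertex set $\{u_0,u_1,v_0,v_1\}$, a negative loop on each of $u_0,u_1$, a positive loop on each of $v_0,v_1$, a negative digon on each pair $u_iv_j$, a positive edge $u_0u_1$, and a negative edge $v_0v_1$. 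Note also that $P(D,\Pi)$ admits the index-swapping ec-automorphism $u_0\leftrightarrow u_1$, $v_0\leftrightarrow v_1$, which I will use to halve the casework.

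First I would explain why the length-$2$ path indicator used in Cases~(b) and~(d) of Theorem~\ref{thm:dicho} fails here: since $P(D,\Pi)$ contains negative digons, that indicator would create loops in the resulting graph $P(D,\Pi)^*$ (a negative digon $u_iv_j$ witnesses an ec-homomorphic image of the path with $i$ and $j$ sharing an image), and a single loop makes \HOM{} trivially solvable rather than NP-complete. To circumvent this, I would instead use the more elaborate indicator $((I,\Lambda),i,j)$ of Figure~\ref{fig:D2}: the two distinguished vertices $i,j$ joined by a positive edge, a central vertex $c$ carrying a positive loop and joined by negative edges to each of $i,x,j,y$, further negative edges $xi$ and $yj$, and a negative digon $xy$. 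Before computing $P(D,\Pi)^*$, I would check that the map fixing $c$ and swapping $i\leftrightarrow j$, $x\leftrightarrow y$ is an ec-automorphism interchanging $i$ and $j$, so that by the remarks preceding Theorem~\ref{thm:indicator} the graph $P(D,\Pi)^*$ is undirected.

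The core computation is to determine $P(D,\Pi)^*$. The positive loop on $c$ forces $f(c)\in\{v_0,v_1\}$ for every ec-homomorphism $f:(I,\Lambda)\ecto P(D,\Pi)$; consequently $f(i),f(x),f(j),f(y)$ all lie among the negative neighbours of $f(c)$, the digon $xy$ forces $\{f(x),f(y)\}$ to be a negative digon of $P(D,\Pi)$, and the positive edge $ij$ forces $f(i)f(j)$ to be a positive edge. Taking $f(c)=v_0$ and enumerating the few admissible choices for $f(x),f(y),f(i),f(j)$, I expect to obtain exactly the edges $u_0u_1$, $u_0v_1$, $u_1v_1$ (a triangle), and by the index-swapping automorphism the case $f(c)=v_1$ yields the triangle $u_0u_1$, $u_0v_0$, $u_1v_0$. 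Hence $P(D,\Pi)^*$ is $K_4$ with the edge $v_0v_1$ deleted, which is loop-free and non-bipartite. Then \HOM{P(D,\Pi)^*} is NP-complete by Theorem~\ref{thm:HN}, so \echom{P(D,\Pi)} is NP-complete by Theorem~\ref{thm:indicator}, and \shom{D,\Pi} is NP-complete by Proposition~\ref{prop:permequiv}.

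The hard part will be verifying that $P(D,\Pi)^*$ is genuinely loop-free, i.e.\ that no ec-homomorphism of $(I,\Lambda)$ can satisfy $f(i)=f(j)$; this is precisely the property the indicator was engineered to guarantee. Checking it amounts to confirming that a common image $p=f(i)=f(j)$ must carry a positive loop (forced by the positive edge $ij$, so $p\in\{v_0,v_1\}$) yet cannot simultaneously be a negative neighbour of both $f(x)$ and $f(y)$ while $\{f(x),f(y)\}$ is a negative digon lying among the negative neighbours of $f(c)$ — the key point being that neither $v_0$ nor $v_1$ has a negative loop, so it fails to be its own negative neighbour. I would carry out this short final case check alongside the edge enumeration, since both rely on the same analysis of the negative neighbourhoods of $v_0$ and $v_1$.
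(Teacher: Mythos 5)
Your proposal is correct and follows essentially the same route as the paper: the same switching graph $P(D,\Pi)$, the same indicator $((I,\Lambda),i,j)$ from Figure~\ref{fig:D2}, the same conclusion that $P(D,\Pi)^*$ is $K_4$ minus the edge $v_0v_1$ (equivalently, the positive non-loop edges of $P(D,\Pi)$), and the same chain Theorem~\ref{thm:HN} $\Rightarrow$ Theorem~\ref{thm:indicator} $\Rightarrow$ Proposition~\ref{prop:permequiv}. You simply spell out the enumeration and the loop-freeness check that the paper leaves implicit.
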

\begin{proof}
The signed graph $P(D,\Pi)$ and the indicator $((I,\Lambda),i,j)$ are shown in Figure~\ref{fig:D2}. The result of the indicator construction $P(D,\Pi)^*$ is the subgraph of $P(D,\Pi)$ induced by the positive edges that are not loops. Hence, it is $K_4$ minus an edge, that is, the core is $K_3$. The result follows.
\end{proof}

\subsection{List homomorphisms} For a fixed graph $H$, the \emph{list homomorphism problem} for fixed $H$, denoted \listHOM{H}, is defined over inputs consisting of pairs $(G,L)$ where $G$ is a graph and $L$ is a list function assigning a list $L(v)\subseteq V(H)$ of possible target vertices to each vertex $v$ of $G$. The problem asks whether there is a homomorphism $f: G \to H$ such that $f(v) \in L(v)$ for all vertices $v$ of $G$. The complexity of \listHOM{H} is known for all undirected graphs $H$. When $H$ has no loops, Feder, Hell and Huang proved that \listHOM{H} is NP-complete unless $H$ is bipartite and is the complement of a circular arc graph~\cite{FHH99} (then it is polynomial-time solvable). For a signed graph $(H,\Pi)$, problem \slistHOM{H,\Pi} can be defined analogously as for undirected target graphs, but in the context of s-homomorphisms. We have the following consequence of Theorem~\ref{thm:dicho}.

\begin{cor}
Let $(H,\Pi)$ be a loop-free signed graph with no negative digon. Then, \slistHOM{H,\Pi} is polynomial-time solvable if $\Pi\equiv\emptyset$ (that is, $(H,\Pi)$ is balanced) and $H$ is a bipartite graph that is the complement of a circular arc graph. Otherwise, \slistHOM{H,\Pi} is NP-complete.
\end{cor}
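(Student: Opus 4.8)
The plan is to reduce $\slistHOM{H,\Pi}$ to the classical list homomorphism problem $\listHOM{H}$ when $(H,\Pi)$ is balanced, and to fall back on Theorem~\ref{thm:dicho} otherwise. I would first record a structural observation that glues the two regimes together. Since $(H,\Pi)$ is loop-free and has no negative digon, its s-core is also loop-free with no negative digon; a negative digon being forbidden and every other two-edge signed configuration s-retracting to a single edge (switch the two edges to the same sign, then retract), the only such s-core with at most two edges is a single positive edge. Hence the s-core of $(H,\Pi)$ has at most two edges \emph{precisely} when $(H,\Pi)$ is balanced (that is, $\Pi\equiv\emptyset$) and $H$ is bipartite; in every other case the s-core has at least three edges.

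For the polynomial case, assume $\Pi\equiv\emptyset$ and $H$ is a bipartite complement of a circular arc graph. After switching we may take $\Pi=\emptyset$. Because the target has no negative edge, any ec-homomorphism into it can send no negative edge, so $(G,\Sigma)\sto(H,\emptyset)$ forces a switching $\Sigma'\equiv\Sigma$ with $\Sigma'=\emptyset$; that is, $(G,\Sigma)$ must be balanced, and then an s-homomorphism is exactly a homomorphism $G\to H$ of the underlying graphs. Since lists are assigned to vertices and are untouched by switching, $\slistHOM{H,\emptyset}$ on an instance $(G,\Sigma,L)$ amounts to testing $\Sigma\equiv\emptyset$ (polynomial by Proposition~\ref{prop:equiv}) and then solving $\listHOM{H}$ on $(G,L)$, which is polynomial by the Feder--Hell--Huang theorem~\cite{FHH99}.

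For the NP-complete cases, membership in NP is clear (guess the switching and the map, then verify). If $(H,\Pi)$ is not balanced, or is balanced but $H$ is not bipartite, then by the structural observation the s-core of $(H,\Pi)$ has at least three edges, so $\shom{H,\Pi}$ is NP-complete by Theorem~\ref{thm:dicho}; since $\slistHOM{H,\Pi}$ specializes to $\shom{H,\Pi}$ upon taking every list equal to $V(H)$, it is NP-complete as well. The remaining case is $\Pi\equiv\emptyset$ with $H$ bipartite but not a complement of a circular arc graph; here $\shom{H,\Pi}$ is polynomial, so a different reduction is needed. I would reduce from $\listHOM{H}$, which is NP-complete by Feder--Hell--Huang: given an instance $(G,L)$, output $(G,\emptyset,L)$. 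As $(G,\emptyset)$ is balanced, an s-homomorphism to $(H,\emptyset)$ respecting $L$ is precisely a list homomorphism $G\to H$, so the two instances are equivalent.

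The main obstacle is the structural observation linking the two halves of the argument: one must verify carefully that, under the loop-free and no-negative-digon hypotheses, a two-edge s-core can only be a single edge, so that the regime in which Theorem~\ref{thm:dicho} yields polynomiality coincides \emph{exactly} with $(H,\Pi)$ balanced and $H$ bipartite. This is what isolates the balanced, bipartite, non-circular-arc case and forces it to be handled by the genuinely list-based reduction of the previous paragraph rather than by Theorem~\ref{thm:dicho} --- it is the one case where the list version is hard while the unrestricted s-homomorphism problem is easy.
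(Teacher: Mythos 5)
Your proof is correct and follows exactly the route the paper intends but leaves implicit: you dispose of the unbalanced and non-bipartite cases through Theorem~\ref{thm:dicho}, via the (correct) observation that under the loop-free, no-negative-digon hypotheses the s-core has at most two edges precisely when $(H,\Pi)$ is balanced with $H$ bipartite, and you settle the balanced cases by a two-way reduction to \listHOM{H} and the Feder--Hell--Huang dichotomy. The only caveat, inherited from the paper's own statement, is that Theorem~\ref{thm:dicho} assumes $H$ connected, so for a disconnected target you should first use the lists to confine the instance to a single component before invoking it.
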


Thus, all polynomial-time cases happen when $\Pi\equiv\emptyset$, and in those cases the dichotomy is described in~\cite{FHH99}. When loops are allowed, a dichotomy for the case $\Pi\equiv\emptyset$ is described by Feder, Hell and Huang~\cite{FHJ03}. We do not know whether in all other cases, \slistHOM{H,\Pi} is again NP-complete.

\subsection{Restricted instances} Another line of research is to study \shom{H,\Pi} for special instance restrictions, such as signed graphs whose underlying graph is planar or has bounded degree. Such studies were undertaken for undirected graphs, see for example~\cite{GHN00} for bounded degree graphs and~\cite{MS09} for planar graphs.

 \section*{Acknowledgements} We thank Andr\'e Raspaud for discussions on the topics of Section~\ref{sec:ZCol}. We also thank Thomas Zaslavsky and the anonymous referees for their detailed comments and helpful suggestions.

\end{document}